\tiny\color{gray},
\date{15th of October 2021}
\DeclareRobustCommand{\SkipTocEntry}[5]{}
\DeclareMathOperator{\Var}{Var}
\DeclareMathOperator{\Cov}{Cov}
\def\1{{\mathbf 1}}
\newtheorem{thm}{Theorem}[section]
\newtheorem{proposition}[thm]{Proposition}
\newtheorem{corollary}[thm]{Corollary}
\newtheorem{definition}[thm]{Definition}
\newtheorem{lemma}[thm]{Lemma}
\newtheorem{remark}[thm]{Remark}
\numberwithin{equation}{section}
\numberwithin{figure}{section}
\date{\today}
\begin{document}

\title{A continuous-time Kyle model with price-responsive traders}

\author{Eunjung Noh}
\address{Department of Mathematics and Computer Science, Rollins College, Winter Park, Florida 32789, USA}
\email{enoh@rollins.edu}

\begin{abstract}
Classical Kyle-type models of informed trading typically treat noise trader demand as purely exogenous. In reality, many market participants react to price movements and news, generating feedback effects that can significantly alter market dynamics. This paper develops a continuous-time Kyle framework in which two types of price-responsive traders (momentum and contrarian traders) adjust their demand in response to price signals. This extension yields a finite-dimensional Kalman filter for price discovery and leads to a forward–backward Riccati system characterizing equilibrium. We show that when feedback is weak, equilibrium exists and is unique as a smooth perturbation of the classical Kyle solution, allowing us to derive explicit comparative statics for insider profits and price informativeness. For stronger feedback, the model generates rich dynamics, including potential multiplicity of equilibria and amplification effects. Our framework thus bridges the gap between purely exogenous noise and more realistic, behaviorally motivated trading.
\end{abstract}

\keywords
{Market microstructure, Kyle model, price-responsive traders, filtering}

\subjclass{91G80, 91B26, 93E11}
\maketitle

\section{Introduction}
\label{intro}

Noise traders play a central role in modern models of financial markets by masking informative order flow and thereby enabling trade in the presence of asymmetric information. In the classical Kyle model \cite{kyle} and its continuous-time extension by Back \cite{back1992}, noise trading is modeled as purely exogenous and price-inelastic: random orders arrive independently of fundamentals, prices, or past market activity. This assumption delivers analytical tractability and a sharp characterization of equilibrium, but abstracts from a key empirical observation: a substantial fraction of order flow in real markets reacts mechanically to price movements.

A large and growing empirical literature documents that many market participants, including trend-following hedge funds, certain classes of retail traders, and portfolio rules embedded in algorithmic execution, systematically adjust their holdings in response to recent price innovations rather than to underlying fundamentals. Such feedback trading generates endogenous correlation between prices and future order flow, influences the rate at which information is impounded into prices, and can amplify or dampen volatility depending on the strength and direction of the feedback. Despite its importance, the dynamic interaction between informed trading, price-responsive noise traders, and rational market-maker learning remains poorly understood in continuous time.

This paper develops a continuous-time Kyle model with endogenous, price-responsive noise traders. Instead of following a purely exogenous Brownian motion, aggregate noise trading incorporates mechanical responses to price changes through two economically distinct channels:

- \textit{momentum traders}, who buy following price increases and sell after declines, and

- \textit{contrarian traders}, who trade in the opposite direction.

These traders do not observe fundamentals and do not behave strategically. Their positions evolve according to linear mean-reverting dynamics driven by price innovations, allowing their behavior to be embedded within the inferential structure of the Kyle framework.

A central challenge in introducing endogenous feedback into continuous-time microstructure models is preserving finite-dimensional filtering. In the classical Back model, the market maker infers the latent fundamental directly from aggregate order flow, and the filtering problem remains one-dimensional. When order flow itself reacts to prices, the state space becomes endogenous and, a priori, infinite dimensional. This paper shows that by modeling momentum and contrarian positions as linear controlled diffusions, the market maker's inference problem remains linear-Gaussian and can be solved through a finite-dimensional Kalman-Bucy filter. The resulting conditional covariance matrix satisfies a matrix Riccati equation, which couples with the insider's optimal trading problem to produce a forward-backward Riccati system characterizing equilibrium.

This modeling innovation allows for a fully analytic treatment of feedback effects in continuous time. The presence of price-responsive traders distorts the informational content of order flow: prices now reflect not only private information revealed gradually through inconspicuous insider trading, but also mechanical trading that responds to past prices. The resulting feedback loop modifies the strength and timing of price discovery and introduces new stability considerations absent in the classical Kyle setting.

The main contributions of the paper are as follows.
First, we introduce a new continuous-time microstructure model in which aggregate noise trading depends on price innovations. Despite the endogeneity, the system remains linear–Gaussian, and all inference takes place in a finite-dimensional state space.
Second, when feedback parameters are small, equilibrium exists and is unique as a smooth perturbation of the classical Kyle solution. We derive explicit first-order comparative statics for price informativeness, insider trading intensity, and expected insider profits, showing how momentum or contrarian behavior affects information revelation through the filtering channel rather than through ad hoc changes in noise volatility.
As feedback strengthens, on the other hand, the stability properties that guarantee the Kyle equilibrium begin to fail. We identify three mathematically distinct breakdown mechanisms:
(i) finite-time divergence of the Riccati flow;
(ii) loss of contraction of the equilibrium fixed-point mapping; and
(iii) instability of the Kalman–Bucy filter.
These failures correspond to economically meaningful amplification effects and generate the possibility of equilibrium multiplicity.

Taken together, these results bridge the gap between classical Kyle models with exogenous noise and behavioral or algorithmic trading models with endogenous feedback. By preserving analytical tractability while incorporating mechanically price-responsive order flow, the model provides a flexible platform for studying learning, information aggregation, and stability in markets where nonstrategic traders react to prices.


The next section reviews prior studies. Section~\ref{sec:Model} details the model. In Section~\ref{sec:EqAnalysis}, by exploiting the linear-Gaussian structure, we use Kalman-Bucy filter to obtain a coupled forward-backward system of equations that characterizes equilibrium. In Section~\ref{sec:WeakFeedback}, for small feedback parameter, we prove local existence and uniqueness of equilibrium, and provide a sufficient local condition for stability. In Section~\ref{sec:StrongFeedback}, amplification and potential multiplicity of equilibria is possible when feedback is strong. Then, we conclude in Section~\ref{sec:Conclusion}.

\section{Literature review} \label{LitReview}
The starting point of our analysis is the seminal model of Kyle \cite{kyle}, in which a privately informed insider trades strategically against competitive market makers who observe only aggregate order flow from noise traders. In continuous time, Back \cite{back1992} shows that the insider's optimal strategy is inconspicuous, that prices reveal the fundamental gradually, and that equilibrium can be characterized through a scalar Riccati equation for the conditional variance of the asset value. 
Many other extensions of the problem while retaining the exogeneity of noise trading have been studied \cite{
back2000imperfect,cel,bal,bp,baruch,bose2020kyle,bose2021multidimensional,cc2007,
ccd,c,cd1,cd,
cen,
first_paper,ckl23,cdf,corcuera2010kyle,ekren2022kyle}.

A core limitation of this literature is that noise traders are modeled as price-inelastic and behaviorally featureless. The present paper extends the Kyle paradigm by introducing endogenous price-responsive noise traders while preserving continuous-time tractability through a finite-dimensional Kalman–Bucy filtering structure. This departs from existing extensions in which tractability relies heavily on exogenous Brownian noise.

A second strand of literature examines price–order flow feedback. Cespa and Vives \cite{CespaVives2012} show that prices serve as both information carriers and coordination devices, potentially generating amplification and multiplicity when investors place weight on public signals. Relatedly, He and Krishnamurthy \cite{He2013}, Xiong \cite{Xiong2001}, and Brunnermeier \cite{Brunnermeier2001} study feedback loops in which beliefs or risk constraints cause price movements to reinforce themselves. Collin-Dufresne and Fos \cite{cdf} analyzes equilibria with stochastic liquidity, generating time-varying market depth and rich price–volume comovement. However, these frameworks do not specify mechanical trading rules tied directly to price innovations, nor do they modify the filtering structure in which private information is inferred from order flow.

Behavioral finance challenges the classical view of noise traders as passive liquidity providers, emphasizing instead that such agents often behave as if they have information when they do not \cite{Black1986}. These traders commonly rely on technical rules or ``popular models” \cite{Shiller1984,Shiller1990} and may exhibit price-responsive feedback behavior, including positive-feedback or trend-chasing strategies \cite{DeLong1990b} as well as misguided contrarianism. Such trading induces serial correlation in returns \cite{Sentana1992} and, as agent-based simulations show, can amplify volatility and distort prices \cite{Dai2025}.
While these studies demonstrate the prevalence of price-based trading, they rely on discrete-time or reduced-form settings. The present paper incorporates these momentum and contrarian forces directly into a structural continuous-time microstructure equilibrium, showing how mechanical rules alter the filtering channel through which private information is reflected in prices.


\section{Model}\label{sec:Model}
We consider a continuous-time version of the Kyle (1985) market in which a single risky asset is traded over the time interval $[0, T]$. All stochastic processes are defined on a filtered probability space
$(\Omega,\mathcal F,(\mathcal F_t)_{t\ge0},\mathbb P)$ satisfying the usual conditions. The terminal fundamental value of the asset, denoted $v$, is a normally distributed random variable with mean $0$ and variance $\sigma_v^2$. Three types of agents interact in the market: an informed trader, a population of price-responsive noise traders, and a competitive market maker.

The informed trader observes the value $v$ at time $0$ and trades continuously at a rate $\theta_t$. Her cumulative position satisfies
	$$d \Theta_t = \theta_t dt, \quad \Theta_0 = 0 \ .$$

In contrast to the classical setting, we assume that there are two distinct types of price-responsive traders: \emph{trend followers (momentum traders)} and \emph{contrarians}. Let $Z_t$ denote their cumulative order flow, which evolves according to
$$dZ_t = \mu_t dt  + \sigma_z dW_t \ , $$
where $W_t$ is a standard Brownian motion capturing exogenous random orders, and the drift $\mu_t$ represents systematic (possibly price-dependent) behavior, with
\begin{equation}\label{eq:mu}
    \mu_t = \gamma_F m_t + \gamma_C c_t + \sigma_\varepsilon \varepsilon_t \ .
\end{equation}
Here, $m_t$ is the aggregate position of trend followers, $c_t$ is the aggregate position of contrarians, $\gamma_F, \gamma_C \in \mathbb{R}$ are exposure parameters,
and $\varepsilon_t$ is a mean-zero Brownian-driven noise term (scale $\sigma_\varepsilon$) independent of other sources.

The dynamics of $m_t$ and $c_t$ follow linear mean-reverting processes driven by price innovations
\begin{align}\label{eq:m}
    dm_t &= -\alpha_m m_t \, dt + \kappa_m \, dP_t + \sigma_m \, dB^m_t \ , \\
    \label{eq:c}
    dc_t &= -\alpha_c c_t \, dt - \kappa_c \, dP_t + \sigma_c \, dB^c_t \ ,
\end{align}
where
$\alpha_m, \alpha_c > 0$ are mean-reversion rates,
$\kappa_m, \kappa_c \ge 0$ measure sensitivity to price changes (trend followers respond positively, contrarians negatively),
and $B^m_t, B^c_t$ are independent Brownian motions, independent of the asset value and other noise.

The market maker observes only the total order flow
$$Y_t = \Theta_t + Z_t \ , $$ 
and sets the market-clearing price $P_t$ based on  $\mathcal{F}_t^Y = \sigma(Y_s: s \leq t)$. 
The insider's information at time $t$ is given by
\[
\mathcal F_t := \sigma(v) \vee \mathcal F_t^Y,
\]
while the market maker and all other traders condition only on $\mathcal F_t^Y$.
All endogenous order flow components are assumed to be $\mathcal F_t^Y$-adapted, so that prices remain measurable with respect to the public filtration.


\begin{remark} 
    In contrast to classical Kyle models, the price-responsive traders introduced here are endogenous and generate order flow that depends on prices. As such, they do not provide purely exogenous liquidity. To ensure gradual information revelation and to preserve a well-defined filtering problem, we retain an exogenous noise component \( W_t\) in aggregate order flow $Z_t$. This residual noise captures liquidity-motivated trading unrelated to prices or fundamentals and plays the same informational role as noise traders in the classical Kyle framework.
\end{remark}

Under rational expectations, the price equals the conditional expectation of the asset value given public information
$$P_t = \mathbb{E}[v |\mathcal{F}_t^Y] \ . $$
Following the usual linear–Gaussian formulation, 
price dynamics can be written as
$$dP_t = \lambda_t dY_t \ , $$
where $\lambda_t$ is the price impact coefficient determined endogenously in equilibrium. 


The informed trader chooses her trading strategy to maximize expected terminal wealth $$\mathbb{E}[U(X_T) | \mathcal{F}_0] \ , \quad 
X_T = \int_0^T (v-P_t) \theta_t dt \ , $$ 
where $X_T$ is the terminal wealth and $U$ is a utility function. Unless otherwise noted, we assume risk neutrality $U(x) = x$.

\begin{definition}\label{Def:AdmissibleStrat}
(Admissible Strategies)
An informed trader's trading strategy is a real-valued process $\theta = (\theta_t)_{t\in[0,T]}$ representing the rate of order submission.
A strategy $\theta$ is said to be \emph{admissible} if it satisfies the following conditions:

\begin{enumerate}
\item {Adaptedness.} 
$\theta$ is progressively measurable with respect to the insider's information filtration
$
\mathcal F_t = \sigma(v) \vee \mathcal F_t^Y .
$

\item {Integrability.}
$\theta$ is square-integrable in the sense that
\[
\mathbb E\!\left[ \int_0^T \theta_t^2 \, dt \right] < \infty .
\]

\item {Well-defined order flow.}
The resulting order flow process
$ \int_0^t \theta_s \, ds$
is of finite variation and the aggregate order flow $Y$ is well-defined.

\end{enumerate}

The set of admissible strategies is denoted by $\mathcal A$.
\end{definition}

\begin{definition}\label{Def:Equilibrium}
(Equilibrium)
    An equilibrium consists of a pricing rule $P^*$ and a trading strategy $\theta^*$ such that
    \begin{itemize}
	   \item[(a)] (Optimality) Given the pricing rule $P^*$, the informed trader's trading strategy $\theta^*$ maximizes expected terminal wealth
       $$ \mathbb{E}[X_T | \mathcal{F}_0] , \quad 
X_T = \int_0^T (v-P_t) \theta_t dt \ . $$
	   \item[(b)] (Rational pricing) Given the informed trader's trading strategy $\theta^*$, the market maker sets prices $P^*$ according to rational expectations
        $$ P_t = \mathbb{E}[v |\mathcal{F}_t^Y], \quad
        \mathcal{F}_t^Y = \sigma(Y_s: s \leq t) \ , $$
        where $Y_t$ denotes aggregate order flow.
    \end{itemize}
\end{definition}

This specification preserves the linear-Gaussian structure, enabling a finite-dimensional Kalman-Bucy filter for price inference and tractable equilibrium analysis.

The next section reformulates this equilibrium problem as a self-consistent filtering fixed point linking the insider’s strategy, the market maker’s inference, and the feedback dynamics of price-responsive traders.

\section{Equilibrium Analysis}
\label{sec:EqAnalysis}

\subsection{State-space representation and price filtering}
\label{subsec:Filtering}

With momentum and contrarian traders, the system state becomes finite-dimensional. Define
\[
x_t := (v, m_t, c_t)^\top \ ,
\]
where $v$ is the terminal asset value, and $m_t, c_t$ are the positions of trend followers and contrarians. The dynamics are linear
\[
dx_t = A x_t \, dt + B \, dW_t + u_t \, dt \ ,
\]
where
$A = \begin{pmatrix}
        0 & 0 & 0 \\
        0 & -\alpha_m & 0 \\
        0 & 0 & -\alpha_c
    \end{pmatrix}$ is the drift matrix,
$B$ collects diffusion coefficients from $(\sigma_m, \sigma_c, \sigma_\varepsilon)$,
and $u_t$ represents deterministic inputs from price-dependent drifts.

The observation equation for total order flow is
\[
dY_t = \theta_t \, dt + \mu_t \, dt + \sigma_z \, dW_t \ ,
\]
where $\mu_t = \gamma_F m_t + \gamma_C c_t + \sigma_\varepsilon \varepsilon_t$. 

Because the system is linear and Gaussian, the market maker can compute the filtered state $\hat{x}_t = \mathbb{E}[x_t \mid \mathcal{F}^Y_t]$ and its covariance 
\[
\Sigma_t = \Cov(x_t - \hat{x}_t \mid \mathcal{F}^Y_t) =
\begin{pmatrix}
\Sigma_{vv}(t) & \Sigma_{vm}(t) & \Sigma_{vc}(t) \\
\Sigma_{vm}(t) & \Sigma_{mm}(t) & \Sigma_{mc}(t) \\
\Sigma_{vc}(t) & \Sigma_{mc}(t) & \Sigma_{cc}(t)
\end{pmatrix},
\qquad
\Sigma_0 =
\begin{pmatrix}
\sigma_v^2 & 0 & 0 \\
0 & \Var(m_0) & 0 \\
0 & 0 & \Var(c_0)
\end{pmatrix}.
\]
The covariance evolves according to the matrix Riccati equation
\begin{equation}\label{eq:Riccati}
\dot{\Sigma}_t = A \Sigma_t + \Sigma_t A^\top + Q - \Sigma_t C_t^\top R^{-1} C_t \Sigma_t \ ,
\end{equation}
where
$Q = \operatorname{diag}(0, \sigma_m^2, \sigma_c^2)$ is the state-noise covariance,
$C_t = (\beta_t, \gamma_F, \gamma_C)$ is the measurement row vector capturing how $(v,m,c)$ enter the drift of $Y_t$,
and $R = \sigma_z^2$ is the observation noise variance.

The Kalman gain is
\[
K_t = \Sigma_t C_t^\top R^{-1} \ ,
\]
and the price evolves as
\[
dP_t = e_1^\top K_t \, \sigma_z \, d\tilde{W}_t \ ,
\]
where $e_1 = (1,0,0)^\top$ and $\tilde{W}_t$ is the innovation process defined by
\[
d\tilde{W}_t = \frac{dY_t - \mathbb{E}[dY_t \mid \mathcal{F}^Y_t]}{\sigma_z} \ .
\]

This formulation preserves the linear-Gaussian structure, enabling explicit computation of the filtering equations and equilibrium conditions.

\subsection{Equilibrium restriction and inconspicuous trading}

In principle, one could allow the informed trader to choose a general adapted trading rate $\theta_t$, so that $d\Theta_t = \theta_t dt$. However, in the present setting such general strategies would lead to a nonlinear filtering problem for the market maker and destroy the Gaussian structure of beliefs. As a consequence, prices would no longer admit a tractable representation, and equilibrium characterization would require solving a fully nonlinear stochastic control and filtering problem.

We therefore restrict attention to linear strategies of the form
    $$\theta_t = \beta_t (v-P_t) \ , $$
where $\beta_t$ is deterministic and locally square-integrable. Within the linear-Gaussian class, such strategies are optimal, yield inconspicuous trading under rational pricing, and lead to a closed finite-dimensional characterization of equilibrium via Kalman-Bucy filtering and Riccati equations.

In the following lemma, we introduce and prove the inconspicuousness of a trading strategy in our model setup. It's not something we assume to start, but a simple result we have as many other continuous-time Kyle models.
\begin{lemma}(Inconspicuousness of linear strategies)
    Suppose the informed trader follows a linear strategy of the form
    $$\theta_t = \beta_t (v - P_t) \ , $$
    where $\beta_t$ is deterministic and locally square-integrable. Then, under rational pricing, the informed trader’s order flow is inconspicuous in the sense that
    $$\mathbb{E}[ \theta_t | \mathcal{F}_t^Y] = 0 \quad 
    \text{for all} 
    \quad t \in [0, T] \ . 
    $$
\end{lemma}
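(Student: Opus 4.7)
The plan is to reduce the claim to a direct application of the tower property of conditional expectations, using the two structural ingredients in the statement: determinism of $\beta_t$ and the rational pricing identity. First, I would observe that since $\beta_t$ is deterministic it is $\mathcal{F}_t^Y$-measurable, so by the take-out property,
\[
\mathbb{E}[\theta_t \mid \mathcal{F}_t^Y] = \beta_t\,\mathbb{E}[v - P_t \mid \mathcal{F}_t^Y].
\]
It is worth emphasizing here that determinism (not mere adaptedness to the insider's filtration $\mathcal{F}_t = \sigma(v) \vee \mathcal{F}_t^Y$) is what makes this step legitimate; this is the reason the inconspicuousness result is available for the restricted class of strategies considered in the previous subsection.

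Next, I would use rational pricing, $P_t = \mathbb{E}[v \mid \mathcal{F}_t^Y]$, to rewrite the inner conditional expectation. Because $P_t$ is by construction $\mathcal{F}_t^Y$-measurable, one can pull it outside the conditional expectation to get
\[
\mathbb{E}[v - P_t \mid \mathcal{F}_t^Y] = \mathbb{E}[v \mid \mathcal{F}_t^Y] - P_t = P_t - P_t = 0.
\]
Combining the two displays yields $\mathbb{E}[\theta_t \mid \mathcal{F}_t^Y] = 0$ for every $t \in [0,T]$.

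The only technical point to verify is that all conditional expectations above are well defined and finite. This follows from $v \in L^2(\mathbb{P})$ (it is Gaussian with variance $\sigma_v^2$), from $P_t = \mathbb{E}[v\mid \mathcal{F}_t^Y]$ being in $L^2$ as an orthogonal projection, and from local square-integrability of $\beta_t$ as stipulated in the hypothesis, which also ensures that $\theta_t(v-P_t)$ satisfies the admissibility condition of Definition~\ref{Def:AdmissibleStrat}. There is no real obstacle in the argument; the lemma is essentially a one-line consequence of the tower property once the strategy is linear and deterministic in $\beta_t$. The conceptual content lies not in the difficulty of the proof but in what it says: under rational pricing, the insider's order flow is, from the market maker's perspective, indistinguishable in mean from pure noise, which is precisely what justifies combining it with the exogenous and price-responsive components into the linear-Gaussian filtering system set up in Subsection~\ref{subsec:Filtering}.
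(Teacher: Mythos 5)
Your argument is the same as the paper's: factor out the deterministic $\beta_t$ and use rational pricing to conclude $\mathbb{E}[v - P_t \mid \mathcal{F}_t^Y] = 0$; the paper states exactly these two steps. Your additional remarks on integrability and on why determinism (rather than mere adaptedness) is needed are correct elaborations but do not change the route.
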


\begin{proof}
    By rational pricing, 
    $$\mathbb{E}[v-P_t  | \mathcal{F}_t^Y ] = 0 \ . $$
    Since $\beta_t$ is deterministic, the result follows immediately. 
\end{proof}

\begin{remark}
    In contrast to classical Kyle models with purely exogenous noise, aggregate order flow in the present model includes endogenous, price-responsive components and therefore does not have the same law as exogenous noise. Inconspicuousness here refers instead to the absence of predictable drift in the informed trader's orders conditional on public information, which is sufficient to sustain gradual information revelation and rational pricing.
\end{remark}

\subsection{Informed trader's optimization problem}
\label{subsec:InsiderOptProb}

Under risk neutrality, the terminal wealth of the insider is, by integration by parts, 
\[
X_T = \Theta_T (v - P_T) + \int_0^T \Theta_t dP_t
= \int_0^T \beta_t (v - P_t)^2 \, dt \ .
\]
For a given deterministic $\beta$, the expected profit is
\[
\mathbb{E}[X_T \mid \mathcal{F}_0] = \int_0^T \beta_t \Sigma_{vv}(t) \, dt \ ,
\]
where $\Sigma_{vv}(t) = \Var(v - P_t \mid \mathcal{F}_t^Y)$ is the $(1,1)$ entry of the covariance matrix $\Sigma_t$. The insider chooses $\beta$ to maximize this integral subject to the Riccati dynamics for $\Sigma_t$
    $$
    \begin{aligned}
    \max_{\beta} \quad & J(\beta) := \int_0^T \beta_t \Sigma_{vv}(t) \, dt \ , \\
    \text{s.t.} \quad & \dot{\Sigma}_t = A \Sigma_t + \Sigma_t A^\top + Q - \Sigma_t C_t^\top R^{-1} C_t \Sigma_t \ , \\
    & \Sigma_0 \succeq 0, \quad \Sigma_{vv}(T) = 0 \ .
    \end{aligned}
    $$

This is a two-point boundary value problem. To apply Pontryagin's Maximum Principle, define the Hamiltonian
    $$
    H(t, \Sigma_t, \beta_t, \Lambda_t) = \beta_t \Sigma_{vv}(t) + \langle \Lambda_t, \dot{\Sigma}_t \rangle \ ,
    $$
where $\Lambda_t$ is the adjoint matrix 
and $\langle A,B \rangle = trace{(A^T B)}$ is the inner product of matrices.
The first-order condition (FOC) for $\beta_t$ is
    $$
    \frac{\partial H}{\partial \beta_t} = \Sigma_{vv}(t) + \langle \Lambda_t, \frac{\partial \dot{\Sigma}_t}{\partial \beta_t} \rangle = 0 \ .
    $$

The adjoint $\Lambda_t$ satisfies the backward ODE 
\[
-\dot{\Lambda}_t = \frac{\partial H}{\partial \Sigma_t} \ , \qquad \Lambda_T = (p, 0, \dots, 0) \ ,
\]
where $p$ is a Lagrange multiplier enforcing $\Sigma_{vv}(T) = 0$. Together, the system consists of
\begin{itemize}
    \item Forward Riccati ODE for $\Sigma_t$,
    \item Backward adjoint ODE for $\Lambda_t$,
    \item Pointwise algebraic FOC for $\beta_t$,
    \item Terminal condition $\Sigma_{vv}(T) = 0$.
\end{itemize}
This defines the equilibrium as a coupled forward-backward system.

\subsection{Covariance dynamics and Riccati system}

Writing out the Riccati equation \eqref{eq:Riccati} component-wise gives the following system of ODEs
\[
\begin{aligned}
\dot{\Sigma}_{vv} &= -\frac{1}{\sigma_z^2} \big( \beta_t^2 \Sigma_{vv}^2 + 2 \beta_t \gamma_F \Sigma_{vv} \Sigma_{vm} + 2 \beta_t \gamma_C \Sigma_{vv} \Sigma_{vc} \big), \\
\dot{\Sigma}_{vm} &= -\alpha_m \Sigma_{vm} - \frac{1}{\sigma_z^2} \big( \beta_t^2 \Sigma_{vv} \Sigma_{vm} + \beta_t \gamma_F (\Sigma_{vm}^2 + \Sigma_{vv}\Sigma_{mm}) + \beta_t \gamma_C (\Sigma_{vm}\Sigma_{vc} + \Sigma_{vv}\Sigma_{mc}) \big), \\
\dot{\Sigma}_{vc} &= -\alpha_c \Sigma_{vc} - \frac{1}{\sigma_z^2} \big( \beta_t^2 \Sigma_{vv} \Sigma_{vc} + \beta_t \gamma_F (\Sigma_{vm}\Sigma_{vc} + \Sigma_{vv}\Sigma_{mc}) + \beta_t \gamma_C (\Sigma_{vc}^2 + \Sigma_{vv}\Sigma_{cc}) \big), \\
\dot{\Sigma}_{mm} &= -2\alpha_m \Sigma_{mm} + \sigma_m^2 - \frac{1}{\sigma_z^2} \big( \gamma_F^2 \Sigma_{mm}^2 + 2 \gamma_F \gamma_C \Sigma_{mm}\Sigma_{mc} + \beta_t \gamma_F (2\Sigma_{vm}\Sigma_{mm}) \big), \\
\dot{\Sigma}_{cc} &= -2\alpha_c \Sigma_{cc} + \sigma_c^2 - \frac{1}{\sigma_z^2} \big( \gamma_C^2 \Sigma_{cc}^2 + 2 \gamma_F \gamma_C \Sigma_{cc}\Sigma_{mc} + \beta_t \gamma_C (2\Sigma_{vc}\Sigma_{cc}) \big), \\
\dot{\Sigma}_{mc} &= -(\alpha_m + \alpha_c)\Sigma_{mc} - \frac{1}{\sigma_z^2} \big( \gamma_F^2 \Sigma_{mm}\Sigma_{mc} + \gamma_C^2 \Sigma_{cc}\Sigma_{mc} + \gamma_F \gamma_C (\Sigma_{mc}^2 + \Sigma_{mm}\Sigma_{cc}) \\
&\qquad + \beta_t (\gamma_F \Sigma_{vm}\Sigma_{mc} + \gamma_C \Sigma_{vc}\Sigma_{mc}) \big).
\end{aligned}
\]

This system governs the evolution of the six independent entries of $\Sigma_t$. For any \textit{deterministic} $\beta_t \in L^2_{\text{loc}}([0,T))$, the right-hand side is locally Lipschitz, so the system admits a unique solution on $[0,T]$.

\begin{proposition}[Well-posedness of the covariance Riccati system]
Fix $T < \infty$ and let $\beta : [0,T] \to \mathbb{R}$ be a deterministic, bounded function. Consider the componentwise Riccati system for the six entries of $\Sigma_t$ given above, with initial condition $\Sigma_0 \succeq 0$. Then,
\begin{enumerate}
    \item There exists a unique solution $\Sigma_t$ on $[0,T]$.
    \item $\Sigma_t$ remains symmetric and positive semidefinite for all $t \in [0,T]$.
\end{enumerate}
\end{proposition}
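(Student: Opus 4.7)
The plan is to combine a local Picard--Lindelöf argument with three invariance observations (symmetry, positive semidefiniteness, and an a priori upper bound), then use the bound to extend the local solution to the whole interval $[0,T]$. Throughout I would work with the matrix formulation \eqref{eq:Riccati}; the six coordinate ODEs displayed in the text are simply the entries of this equation, so well-posedness for \eqref{eq:Riccati} immediately gives well-posedness for the system.

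Setting $F(t,\Sigma) := A\Sigma + \Sigma A^\top + Q - \Sigma M_t \Sigma$ with $M_t := C_t^\top R^{-1} C_t$, the field $F$ is quadratic in $\Sigma$ with coefficients that are bounded measurable functions of $t$ on $[0,T]$ (since $\beta$ is bounded by hypothesis and all other parameters are constants). Hence $F$ is continuous in $t$ and locally Lipschitz in $\Sigma$, and Picard--Lindelöf yields a unique $C^1$ solution on a maximal forward interval $[0,\tau)\subseteq[0,T]$. For symmetry, I note that $Q$ and $M_t$ are symmetric, so $\Sigma_t^\top$ satisfies the same Riccati equation as $\Sigma_t$ with initial condition $\Sigma_0^\top = \Sigma_0$; uniqueness then forces $\Sigma_t = \Sigma_t^\top$ on $[0,\tau)$.

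For positive semidefiniteness I would invoke the Nagumo (subtangential) condition for the closed convex cone of PSD matrices: at any hypothetical boundary point, i.e.\ a time $t$ where $\Sigma_t v = 0$ for some $v\neq 0$,
\[
v^\top \dot{\Sigma}_t v \;=\; v^\top Q v - v^\top \Sigma_t M_t \Sigma_t v \;=\; v^\top Q v \;\ge\; 0,
\]
since the quadratic term vanishes on $\ker \Sigma_t$ and $Q\succeq 0$. This prevents the trajectory from exiting the cone, so $\Sigma_t\succeq 0$ on $[0,\tau)$. To rule out finite-time blow-up, I would compare $\Sigma_t$ with the solution $S_t$ of the linear Lyapunov equation $\dot{S}_t = A S_t + S_t A^\top + Q$, $S_0 = \Sigma_0$, which exists and is bounded on $[0,T]$. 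The difference $D_t := S_t - \Sigma_t$ solves $\dot{D}_t = A D_t + D_t A^\top + \Sigma_t M_t \Sigma_t$ with $D_0 = 0$ and PSD source $\Sigma_t M_t \Sigma_t \succeq 0$; variation of constants then yields $D_t \succeq 0$. Combined with the lower bound, $0\preceq \Sigma_t \preceq S_t$ gives a uniform norm bound on $[0,\tau)$, forcing $\tau = T$.

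The most delicate step is the positive semidefiniteness argument. The state-noise matrix $Q = \mathrm{diag}(0,\sigma_m^2,\sigma_c^2)$ is only PSD (not PD) and vanishes on the direction $e_1$ associated with $\Sigma_{vv}$, so one must verify that the subtangential inequality is still nonnegative along this degenerate direction. The computation above shows the derivative is exactly $v^\top Q v \ge 0$ irrespective of which null direction $v\in\ker\Sigma_t$ is chosen, which suffices for Nagumo's invariance theorem; an alternative, if one prefers to avoid subtle cone-invariance arguments, is to regularize by replacing $Q$ with $Q + \epsilon I$, run the same argument to obtain $\Sigma_t^{(\epsilon)} \succ 0$, and pass to the limit $\epsilon \downarrow 0$ using continuous dependence of the ODE on parameters.
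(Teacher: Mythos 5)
Your proof is correct, and while it shares the inevitable skeleton with the paper's argument (Picard--Lindel\"of for local existence, then invariance plus an a priori bound to extend to $[0,T]$), the two load-bearing steps are carried out with genuinely different and sharper tools. For global existence the paper argues componentwise, asserting that the quadratic terms are sign-favorable and invoking Gr\"onwall to bound ``all entries''; as written this does not obviously control the off-diagonal entries such as $\Sigma_{vm}$ (whose boundedness really comes from positive semidefiniteness via $\Sigma_{vm}^2\le\Sigma_{vv}\Sigma_{mm}$), so the paper's Step~2 implicitly leans on Step~3. Your matrix-level comparison $0\preceq\Sigma_t\preceq S_t$ with the linear Lyapunov flow $\dot S_t=AS_t+S_tA^\top+Q$ bounds the whole matrix at once and avoids this circularity. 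For positive semidefiniteness the paper simply cites ``monotonicity'' of the covariance Riccati equation, whereas you verify the Nagumo subtangential condition on the PSD cone explicitly, correctly handling the degenerate direction where $Q$ vanishes (and your $\epsilon$-regularization fallback is a legitimate alternative). Two small points to tidy up: since $\beta$ is only assumed bounded (not continuous), you should either invoke the Carath\'eodory existence theorem or note that $F$ is merely measurable in $t$ and locally Lipschitz in $\Sigma$, rather than claiming continuity in $t$; and in the displayed subtangential computation you should state explicitly that the terms $v^\top A\Sigma_t v$ and $v^\top\Sigma_t A^\top v$ vanish because $\Sigma_t v=0$ and $\Sigma_t$ is symmetric, since your display silently omits them before concluding $v^\top\dot\Sigma_t v=v^\top Qv$. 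Neither issue affects the validity of the argument; overall your write-up is more rigorous than the paper's own proof.
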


\begin{proof}
\textbf{Step 1: Local existence and uniqueness.}  
The Riccati system can be written as an ODE on $\mathbb{R}^6$
\[
\dot{S}(t) = F(S(t), t) \ , \qquad S(0) = S_0 \ ,
\]
where $S(t) = (\Sigma_{vv}, \Sigma_{vm}, \Sigma_{vc}, \Sigma_{mm}, \Sigma_{mc}, \Sigma_{cc})^\top$ and $F$ is the right-hand side defined by the componentwise equations. For fixed $\beta_t$, $F$ is a polynomial in the entries of $S$ with coefficients depending continuously on $t$. Therefore, $F$ is locally Lipschitz in $S$. By the Picard-Lindelöf theorem, there exists a unique local solution on some interval $[0,\tau)$.

\textbf{Step 2: Global existence on $[0,T]$.}  
To extend the solution globally, we show that no finite-time blow-up occurs. Observe that each equation has the form
\[
\dot{\Sigma}_{ij} = \text{(linear terms in $\Sigma$)} + \text{(quadratic terms in $\Sigma$)} + \text{constant noise term (for diagonals)}.
\]
The quadratic terms are negative definite because they appear with a factor $-\frac{1}{\sigma_z^2}$ and involve squares of $\Sigma$ entries. For example, 
\[
\dot{\Sigma}_{vv} = -\frac{1}{\sigma_z^2}(\beta_t^2 \Sigma_{vv}^2 + \cdots) \le 0 \ .
\]
Thus, the diagonal entries $\Sigma_{vv}, \Sigma_{mm}, \Sigma_{cc}$ are non-increasing up to additive positive constants from $Q$. By Grönwall's inequality, all entries remain bounded on $[0,T]$. Therefore, the solution extends uniquely to $[0,T]$.

\textbf{Step 3: Symmetry and positive semidefiniteness.}  
The Riccati flow preserves symmetry because the right-hand side is derived from the matrix equation
\[
\dot{\Sigma} = A\Sigma + \Sigma A^\top + Q - \Sigma C^\top R^{-1} C \Sigma \ ,
\]
which is symmetric whenever $\Sigma$ is symmetric. Positive semidefiniteness is preserved because the Riccati equation for covariance matrices is monotone and $Q \succeq 0$, $R > 0$. 
Therefore, $\Sigma_t \succeq 0$ for all $t$.

Combining steps 1-3, the solution exists uniquely on $[0,T]$ and remains symmetric positive semi-definite.
\end{proof}

\begin{remark}[Well-posedness and Economic Interpretation]
The well-posedness result ensures that the covariance matrix $\Sigma_t$ remains bounded and positive semi-definite for all $t \in [0,T]$. Mathematically, this means the Riccati flow does not blow up and preserves symmetry and non-negativity of variances and covariances. These properties are essential for the Kalman–Bucy filter to operate correctly, since the market maker's price update depends on $\Sigma_t$ through the Kalman gain.

Economically, boundedness of $\Sigma_t$ guarantees that the market maker can consistently update beliefs about the asset value and the positions of momentum and contrarian traders without the filtering process becoming unstable. If $\Sigma_t$ were to diverge or lose positive semi-definiteness, prices would cease to reflect information rationally, and the informed trader's optimization problem would become ill-posed because the variance term $\Sigma_{vv}(t)$ would be undefined. 

Thus, well-posedness ensures that feedback effects from misinformed traders remain contained, preserving market efficiency and enabling equilibrium analysis.
\end{remark}

Let's take a closer look at the ODE for (1,1)-component $\Sigma_{vv}$
$$\dot{\Sigma}_{vv} = -\frac{1}{\sigma_z^2} \big( \beta_t^2 \Sigma_{vv}^2 + 2 \beta_t \gamma_F \Sigma_{vv} \Sigma_{vm} + 2 \beta_t \gamma_C \Sigma_{vv} \Sigma_{vc} \big) \ . $$

It shows that the rate at which $\Sigma_{vv}(t) = \text{Var}(v-P_t \mid \mathcal{F}_t^Y)$ falls. That is, the speed of price discovery depends not only on the insider's trading intensity $\beta_t$ but also on the feedback parameters $\gamma_F$ and $\gamma_C$. Positive feedback ($\gamma_F>0$) strengthens the linkage between price innovations and future order flow, increasing the information extracted from trades and accelerating the decline of $\Sigma_{vv}(t)$. This modifies both the strength and timing of price discovery relative to the classical Kyle model.

\section{Weak Feedback}
\label{sec:WeakFeedback}

\subsection{Local existence and uniqueness of equilibrium}

We now establish that for sufficiently small feedback parameters, a unique equilibrium exists.

\begin{thm}\label{thm:LocalExist}
    Let $
h := (\kappa_m, \kappa_c, \gamma_F, \gamma_C, \sigma_\varepsilon)
$ be the feedback parameter vector. 
Denote $\beta^{(0)}$ as the classical Kyle's equilibrium intensity corresponding to $h=0$. Then, there exists $\varepsilon>0$ such that for all $h$ with $||h|| < \varepsilon$,
\begin{enumerate}
    \item there exists a unique equilibrium trading intensity $\beta^*(h)$ on $[0, T)$, which depends smoothly on $h$ and satisfies $\beta^*(h) \rightarrow \beta^{(0)}$ as $h \rightarrow 0$.
    \item there exists a unique equilibrium pair $(\theta^*, P^*)$ satisfying Definition \ref{Def:Equilibrium} with $\theta^*_t = \beta_t^* (h) (v-P_t^*)$ and $P^*$ given by the Kalman-Bucy filter. 
\end{enumerate}
\end{thm}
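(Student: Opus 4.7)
The plan is to characterize equilibrium as the zero of a nonlinear map $\Phi(\beta,h)=0$ on an appropriate Banach space of trading intensities and apply the Implicit Function Theorem at the base point $(\beta^{(0)},0)$. For a fixed margin $\delta\in(0,T)$ let $B_\delta := L^\infty([0,T-\delta])$. Given $(\beta,h)$, I would first solve the forward Riccati system of Section~\ref{subsec:Filtering} to obtain $\Sigma_t$ on $[0,T-\delta]$ (well-posed by the Proposition just established), then solve the linear backward adjoint ODE for $\Lambda_t$ with terminal datum $\Lambda_T=(p,0,\ldots,0)$ tuned so that $\Sigma_{vv}(T)=0$, and finally define $\Phi(\beta,h)(t) := \beta_t - \hat{\beta}(t;\Sigma,\Lambda,h)$, where $\hat\beta$ is the pointwise solution of the algebraic FOC $\partial H/\partial \beta=0$ from Section~\ref{subsec:InsiderOptProb}. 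Any zero of $\Phi(\cdot,h)$ corresponds to an equilibrium trading intensity on $[0,T-\delta]$.

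At $h=0$ all feedback parameters vanish, so the cross-covariances $\Sigma_{vm},\Sigma_{vc},\Sigma_{mc}$ satisfy linear homogeneous ODEs with zero initial data and remain identically zero, and the block $(\Sigma_{mm},\Sigma_{cc})$ decouples from $\Sigma_{vv}$ entirely. What survives is the scalar Kyle--Back Riccati $\dot\Sigma_{vv} = -\beta_t^2\Sigma_{vv}^2/\sigma_z^2$ together with its scalar adjoint and FOC, whose joint solution is precisely Back's trajectory $\beta^{(0)}_t$. Hence $\Phi(\beta^{(0)},0)=0$ and the implicit function machinery has the correct base point.

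The heart of the proof is verifying that the Fr\'echet derivative $D_\beta\Phi(\beta^{(0)},0):B_\delta\to B_\delta$ is bounded and invertible. A perturbation $\delta\beta$ produces linear variations $(\delta\Sigma,\delta\Lambda)$ governed by the linearization of the forward--backward system along the Kyle trajectory; thanks to the block decoupling at $h=0$ this reduces to a scalar linear boundary-value problem on $[0,T-\delta]$, which is exactly the second-variation system associated with Back's reduced objective around its optimum. Its invertibility follows from the known strict concavity of the classical Kyle problem \cite{back1992}. The main obstacle is quantifying this invertibility \emph{uniformly in $\delta$} so as to pass to $\delta\to0$, because $\beta^{(0)}_t$ diverges and $\Sigma^{(0)}_{vv}(t)\to0$ as $t\to T$. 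I would handle this by working in weighted $L^2$-spaces calibrated to the explicit classical solution (so that $\Sigma^{(0)}_{vv}$ and $1/\beta^{(0)}$ appear as weights), deriving a priori bounds on $\|D_\beta\Phi^{-1}\|$ stable as $\delta\to0$; the Implicit Function Theorem then produces, for $\|h\|<\varepsilon$, a unique smooth curve $h\mapsto \beta^*(h)\in B_\delta$ with $\beta^*(0)=\beta^{(0)}$.

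Pasting the localized solutions across $\delta$ by uniqueness yields a unique $\beta^*(h)\in L^2_{\mathrm{loc}}([0,T))$ depending smoothly on $h$ with $\beta^*(h)\to\beta^{(0)}$ as $h\to 0$, which proves part~(1). For part~(2), let $P^*$ be the Kalman--Bucy price process of Section~\ref{subsec:Filtering} driven by $\beta^*(h)$, and set $\theta^*_t := \beta^*_t(h)(v-P^*_t)$. Admissibility per Definition~\ref{Def:AdmissibleStrat} follows from local boundedness of $\beta^*$ together with a tail estimate near $T$ analogous to the one in \cite{back1992}, using that $\Sigma_{vv}(T)=0$ controls $\mathbb E[(v-P^*_t)^2]$. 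Rational pricing holds by the very definition of $P^*$, and optimality within the linear class is immediate from the Pontryagin first-order condition satisfied by $\beta^*(h)$. Extension to general admissible strategies uses the standard risk-neutral reduction showing that any $\theta\in\mathcal A$ yields the same expected profit as the linear strategy with the same conditional mean, so the supremum over $\mathcal A$ is attained on the linear class.
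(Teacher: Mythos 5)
Your proposal follows essentially the same route as the paper: both formulate equilibrium as a zero of $\beta \mapsto \beta - G(\beta;h)$ built from the forward Riccati / backward adjoint / pointwise FOC system, apply the Implicit Function Theorem at $(\beta^{(0)},0)$, and obtain invertibility of the linearization from the uniqueness (equivalently, strict concavity) of the classical Kyle problem. The one place you go beyond the paper is in treating the terminal singularity --- truncating to $[0,T-\delta]$, using weights adapted to $\Sigma^{(0)}_{vv}$ and $1/\beta^{(0)}$, and pasting by uniqueness --- which is a genuine refinement, since the paper's stated ambient space $L^2_{\mathrm{loc}}([0,T])$ is not a Banach space and the blow-up of $\beta^{(0)}$ near $T$ is left unaddressed there.
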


\begin{proof}
Recall that the informed trader’s expected profit is
\[
J(\beta) = \int_0^T \beta_t \, \Sigma_{vv}(t) \, dt \ ,
\]
and she chooses $\beta$ to maximize $J(\beta)$ subject to the Riccati dynamics. Define the best-response mapping $G(\beta; h)$ as follows:
\begin{enumerate}
    \item Given $\beta$, solve the Riccati ODE for $\Sigma_t$.
    \item Compute the adjoint system from Pontryagin's Maximum Principle.
    \item Apply the first-order condition $\partial H / \partial \beta_t = 0$ to obtain $\widetilde{\beta} = G(\beta; h)$.
\end{enumerate}
Define the operator $F : \mathcal{B} \times \mathbb{R}^5 \to \mathcal{B}$ on a Banach space $\mathcal{B}$ of deterministic intensity functions by
\[
F(\beta; h) := \beta - G(\beta; h) \ . 
\]
An equilibrium corresponds to a fixed point $\beta^*$ of $G$, i.e.,
\[
F(\beta; h) := \beta - G(\beta; h) = 0 \ .
\]

At $h = 0$, the model reduces to the classical Kyle setting with unique solution $\beta^{(0)}$. The mapping $F$ is Fréchet-differentiable near $(\beta^{(0)}, 0)$, and its derivative with respect to $\beta$ at $(\beta^{(0)}, 0)$ is
\[
D_\beta F(\beta^{(0)}; 0) = I - D_\beta G(\beta^{(0)}; 0) \ ,
\]
where $D_\beta F$ is the Fréchet derivative of $F$ with respect to $\beta$.
Invertibility of $D_\beta F$ follows from the uniqueness of the Kyle equilibrium and the fact that the linearized two-point boundary value problem has only the trivial solution. 
By the Implicit Function Theorem in Banach spaces, there exists $\varepsilon > 0$ such that for $\|h\| < \varepsilon$, a unique root $\beta^*(h)$ exists and depends smoothly on $h$.

Set $\theta_t^* = \beta_t^*(v - P_t^*)$, where $P^*$ is the price process generated by the Kalman filter using $\beta^*$. By construction, 
\begin{itemize}
    \item $\theta^*$ maximizes expected terminal wealth given $P^*$,
    \item $P^*$ satisfies $P_t^* = \mathbb{E}[v \mid \mathcal{F}^Y_t]$.
\end{itemize}
Thus, $(\theta^*,P^*)$ satisfies Definition \ref{Def:Equilibrium}.

Smooth dependence in the Banach space $\mathcal{B} = L^2_{\text{loc}}([0,T])$ implies
\[
\beta^*(h) \longrightarrow \beta^{(0)} \quad \text{in }  L^2_{\text{loc}}([0,T]) \quad \text{ as }  h \to 0 \ ,
\]
completing the proof.
\end{proof}

Throughout the paper, the distinction between weak and strong feedback refers to the magnitude of the feedback parameter vector 
$
h = (\kappa_m, \kappa_c, \gamma_F, \gamma_C, \sigma_\varepsilon).
$
Weak feedback corresponds to parameter values for which endogenous price-responsive order flow remains small relative to exogenous noise, so that the Kyle equilibrium is only mildly perturbed. 
Strong feedback arises when price responsiveness or exposure is sufficiently large that endogenous order flow dominates noise, generating amplification effects and potentially destabilizing price discovery.

Theorem \ref{thm:LocalExist} provides an economically transparent characterization of how price-responsive trading interacts with informed trading. When feedback effects from momentum and contrarian traders are sufficiently weak, the equilibrium of the classical Kyle model is preserved in a robust sense: prices continue to aggregate information gradually, the market maker's inference remains well-behaved, and the informed trader's optimal strategy adjusts smoothly rather than discontinuously.

The key mechanism is that weak feedback does not overwhelm the informational role of exogenous liquidity trading. Although price-responsive traders introduce endogenous drift into aggregate order flow, their influence remains second-order relative to the noise component. As a result, the market maker can still disentangle informed trading from noise using a finite-dimensional Kalman–Bucy filter, and the informed trader optimally trades on the unexpected component of fundamentals relative to public beliefs, as in the classical Kyle logic.

From the informed trader's perspective, Theorem \ref{thm:LocalExist} implies that behavioral trading affects profitability in a continuous manner. Small increases in feedback intensity alter the speed of information revelation and price impact, but do not qualitatively change the informed trader's trading incentives. In particular, insider advantage varies smoothly with the strength of momentum or contrarian trading, allowing for meaningful comparative statics on equilibrium trading intensity, price informativeness, and expected insider profits.

This result stands in sharp contrast to regimes with strong feedback, where endogenous price responsiveness can dominate mean reversion and noise, potentially destabilizing the filtering process. 
In such regimes, price changes feed back aggressively into order flow, amplifying fluctuations and undermining the market maker's ability to infer fundamentals. The model then admits the possibility of equilibrium multiplicity, instability, or explosive dynamics, as reflected in the stability conditions derived below. Theorem \ref{thm:LocalExist} therefore delineates a boundary between a Kyle-like regime with stable price discovery and a feedback-dominated regime in which classical microstructure intuition breaks down.

\subsection{Sufficient stability condition}

The following provides a sufficient local condition for stability of the linearized equilibrium price dynamics.

\begin{thm}[Sufficient Stability Condition]
Let $x_t = (v, m_t, c_t)^\top$ be the state vector, where $v$ is constant and $(m_t,c_t)$ satisfy \eqref{eq:m}-\eqref{eq:c}.
Assume the price admits the linear representation
\begin{equation}\label{eq:price}
P_t = P^{(v)}_t + G_m m_t + G_c c_t + M_t \ ,
\end{equation}
where $G_m,G_c \in \mathbb{R}$ are deterministic coefficients and $M_t$ is an $\mathcal{F}^Y$-martingale. Then, the deterministic part of $(m_t,c_t)$ satisfies
\begin{equation}\label{eq:closed-loop}
\frac{d}{dt}\begin{pmatrix} m_t \\ c_t \end{pmatrix}
= A_{\text{eff}} \begin{pmatrix} m_t \\ c_t \end{pmatrix}, \qquad
A_{\text{eff}} :=
\begin{pmatrix}
-\alpha_m + \kappa_m G_m & \kappa_m G_c \\
-\kappa_c G_m & -\alpha_c - \kappa_c G_c
\end{pmatrix}.
\end{equation}
Define the feedback matrix
\[
F :=
\begin{pmatrix}
\kappa_m G_m & \kappa_m G_c \\
-\kappa_c G_m & -\kappa_c G_c
\end{pmatrix}.
\]
If
\begin{equation}\label{eq:spectral}
\rho(F) < \min\{\alpha_m,\alpha_c\},
\end{equation}
where $\rho(F)$ is the spectral radius of $F$, then the closed-loop system is exponentially stable in mean square, and in particular
\[
\sup_{t \in [0,T]} \mathbb{E}[m_t^2 + c_t^2] < \infty.
\]
\end{thm}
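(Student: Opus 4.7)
The plan is to (i) derive the closed-loop deterministic dynamics from the linear price ansatz, (ii) show that $A_{\text{eff}}$ is Hurwitz under the spectral-radius condition, and (iii) upgrade the deterministic decay to exponential stability in mean square via a quadratic Lyapunov function. For step (i), I would substitute $P_t = P_t^{(v)} + G_m m_t + G_c c_t + M_t$ into \eqref{eq:m}--\eqref{eq:c}: the terms involving $G_m m_t + G_c c_t$ act as endogenous feedback on the drifts of $m_t$ and $c_t$, while the contributions of $P_t^{(v)}$, $M_t$ and the Brownian drivers enter only the zero-mean noise. Collecting the finite-variation components recovers \eqref{eq:closed-loop} with $A_{\text{eff}} = -\diag(\alpha_m,\alpha_c) + F$.

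For step (ii), I would exploit the rank-one structure $F = u v^\top$ with $u = (\kappa_m, -\kappa_c)^\top$ and $v = (G_m, G_c)^\top$, so that the only nonzero eigenvalue of $F$ is $v^\top u = \kappa_m G_m - \kappa_c G_c$ and $\rho(F) = |\kappa_m G_m - \kappa_c G_c|$. The matrix-determinant lemma then yields the explicit characteristic polynomial
\[
p(\mu) = (\mu + \alpha_m)(\mu + \alpha_c) - \kappa_m G_m (\mu + \alpha_c) + \kappa_c G_c (\mu + \alpha_m),
\]
and Routh--Hurwitz reduces Hurwitz-ness of $A_{\text{eff}}$ to sign conditions on its coefficients, which I would verify follow from $\rho(F) < \min\{\alpha_m,\alpha_c\}$. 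An alternative conceptual route is to invoke Gelfand's formula to select a matrix norm $\|\cdot\|_*$ with $\|F\|_* < \min\{\alpha_m, \alpha_c\}$ and estimate $\|e^{A_{\text{eff}} t}\|_*$ directly via the variation-of-constants identity $e^{A_{\text{eff}} t} = e^{-Dt} + \int_0^t e^{-D(t-s)} F e^{A_{\text{eff}} s}\, ds$.

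For step (iii), Hurwitz-ness supplies via Lyapunov's theorem a symmetric $P \succ 0$ solving $A_{\text{eff}}^\top P + P A_{\text{eff}} = -Q$ for any $Q \succ 0$. Setting $V(x) := x^\top P x$ and applying It\^o's formula to $V(m_t, c_t)$, the infinitesimal generator satisfies $\mathcal{L} V \leq -\eta V + C$ for positive constants $\eta, C$ depending on $P$, $Q$, and the bounded diffusion coefficients in $dm_t$ and $dc_t$. Taking expectations and applying Gr\"onwall yields $\mathbb{E}[V(m_t, c_t)] \leq V(m_0, c_0)\, e^{-\eta t} + C/\eta$, which delivers both exponential mean-square decay of $(m_t, c_t)$ and the uniform bound $\sup_{t \in [0,T]} \mathbb{E}[m_t^2 + c_t^2] < \infty$. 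The main obstacle I anticipate is step (ii): the spectrum of $-D + F$ is not an additive combination of the spectra of $-D$ and $F$ because these matrices do not commute, so the passage from $\rho(F) < \min\{\alpha_m, \alpha_c\}$ to Hurwitz-ness requires either the explicit rank-one characteristic-polynomial analysis sketched above or a careful norm-based compatibility argument rather than an elementary spectrum shift.
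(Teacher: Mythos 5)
Your steps (i) and (iii) are sound and, if anything, more careful than the paper's own argument: the paper derives the closed-loop system the same way, writes $A_{\text{eff}}=-D+F$ with $D=\operatorname{diag}(\alpha_m,\alpha_c)$, and then disposes of the stochastic part with a one-line variation-of-constants remark where you supply a genuine Lyapunov/Gr\"onwall argument. The gap is exactly where you anticipated it, in step (ii), and it is not repairable as stated: the Routh--Hurwitz sign conditions do \emph{not} follow from $\rho(F)<\min\{\alpha_m,\alpha_c\}$. Using your own rank-one computation, $\rho(F)=|\kappa_m G_m-\kappa_c G_c|$ and the constant term of $p(\mu)$ is $\alpha_m\alpha_c-\kappa_m G_m\alpha_c+\kappa_c G_c\alpha_m$, which weights the two feedback terms by \emph{different} mean-reversion rates and therefore is not controlled by the difference $\kappa_m G_m-\kappa_c G_c$. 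Concretely, take $\alpha_m=1$, $\alpha_c=10$, $\kappa_m=1.5$, $\kappa_c=1$, $G_m=G_c=1$: then $F$ has eigenvalues $0$ and $0.5$, so $\rho(F)=0.5<1=\min\{\alpha_m,\alpha_c\}$, yet
\[
A_{\text{eff}}=\begin{pmatrix}0.5 & 1.5\\ -1 & -11\end{pmatrix},\qquad \det A_{\text{eff}}=-4<0,
\]
so $A_{\text{eff}}$ has a positive eigenvalue ($\approx 0.37$) and the homogeneous flow grows exponentially. Your fallback route via Gelfand's formula fails for the same reason: the norm $\|\cdot\|_*$ adapted to $F$ does not satisfy $\|e^{-Dt}\|_*\le e^{-\min\{\alpha_m,\alpha_c\}t}$ when $\alpha_m\neq\alpha_c$, since the similarity mixes the two decay rates, so the Gr\"onwall estimate picks up a constant $M>1$ and only yields stability under $\|F\|_*<\min\{\alpha_m,\alpha_c\}/M$.

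You should not feel singled out: the paper's proof commits the identical error. It correctly argues (via Gershgorin/Bauer--Fike applied to the normal matrix $-D$) that $\|F\|<\min\{\alpha_m,\alpha_c\}$ in an induced norm forces $A_{\text{eff}}$ to be Hurwitz, and then asserts ``in particular'' the same conclusion under $\rho(F)<\min\{\alpha_m,\alpha_c\}$ --- but since $\rho(F)\le\|F\|$, the spectral-radius hypothesis is \emph{weaker}, so the implication runs the wrong way. What is actually provable, and what your argument (and the paper's) does establish, is the induced-norm condition \eqref{eq:InducedNormCond} stated in the corollary; the spectral-radius version of the theorem holds only in special cases such as $\alpha_m=\alpha_c$ (where $D$ is a multiple of the identity and the spectrum genuinely shifts additively). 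If you want to salvage a clean statement, either replace $\rho(F)$ by $\|F\|_1$ or $\|F\|_\infty$, or add the trace/determinant conditions $\kappa_m G_m-\kappa_c G_c<\alpha_m+\alpha_c$ and $\alpha_m\alpha_c-\kappa_m G_m\alpha_c+\kappa_c G_c\alpha_m>0$ explicitly, which your characteristic-polynomial computation shows are exactly necessary and sufficient.
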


\begin{proof}
From \eqref{eq:m}-\eqref{eq:c} and \eqref{eq:price}, ignoring martingale and noise terms, the deterministic dynamics are
\[
\dot{x}(t) = A_{\text{eff}} x(t), \qquad x(t) := (m_t,c_t)^\top \ .
\]
Write $A_{\text{eff}} = -D + F$ with $D := \text{diag}(\alpha_m,\alpha_c)$ and $F$ as above. The solution is
\[
x(t) = e^{A_{\text{eff}} t} x(0) \ .
\]
Stability requires that all eigenvalues of $A_{\text{eff}}$ have negative real part. Since $A_{\text{eff}} = -D + F$, Gershgorin's theorem and standard perturbation bounds imply that if $\|F\| < \min\{\alpha_m,\alpha_c\}$ in any induced norm, then $A_{\text{eff}}$ is Hurwitz. In particular, if $\rho(F) < \min\{\alpha_m,\alpha_c\}$, then all eigenvalues satisfy
\[
\Re(\lambda(A_{\text{eff}})) \le -\min\{\alpha_m,\alpha_c\} + \rho(F) < 0 \ ,
\]
where $\Re(\lambda)$ denotes the real part of the eigenvalue $\lambda$. 
Thus, the deterministic system is exponentially stable. For the full SDE, write the variation-of-constants formula
\[
x(t) = e^{A_{\text{eff}} t} x(0) + \int_0^t e^{A_{\text{eff}}(t-s)} D \, dB_s \ ,
\]
where $D = \text{diag}(\sigma_m,\sigma_c)$. Exponential stability of $e^{A_{\text{eff}} t}$ implies that the stochastic convolution has finite second moment on $[0,T]$. Therefore
\[
\sup_{t \in [0,T]} \mathbb{E}[\|x(t)\|^2] < \infty \ .
\]
\end{proof}

\begin{corollary}[Induced Norm Condition]
For the feedback matrix
\[
F = \begin{pmatrix}
\kappa_m G_m & \kappa_m G_c \\
-\kappa_c G_m & -\kappa_c G_c
\end{pmatrix},
\]
the induced norms are
\[
\|F\|_\infty = \max\big\{\kappa_m(|G_m|+|G_c|) , \kappa_c(|G_m|+|G_c|)\big\} \ ,
\]
\[
\|F\|_1 = \max\big\{|G_m|(\kappa_m+\kappa_c), |G_c|(\kappa_m+\kappa_c)\big\} \ .
\]
Since $\rho(F) \le \|F\|_p$ for any induced norm, a sufficient condition for stability is
\begin{equation}\label{eq:InducedNormCond}
    \|F\|_\infty < \min\{\alpha_m,\alpha_c\} \quad \text{or} \quad \|F\|_1 < \min\{\alpha_m,\alpha_c\}.
\end{equation}
\end{corollary}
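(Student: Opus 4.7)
The plan is to reduce the corollary to the spectral hypothesis of the preceding theorem via the classical bound $\rho(F) \le \|F\|_p$, valid for every induced operator norm. Once I verify the two explicit formulas for $\|F\|_\infty$ and $\|F\|_1$, either inequality in \eqref{eq:InducedNormCond} will imply $\rho(F) < \min\{\alpha_m,\alpha_c\}$, which is exactly the spectral condition \eqref{eq:spectral}, so the stability conclusion transfers immediately from the theorem.

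For the two norm evaluations I would use the standard characterizations of $\|\cdot\|_\infty$ and $\|\cdot\|_1$ on $\mathbb{R}^{2\times 2}$ as maximum absolute row sum and maximum absolute column sum, respectively. For $\|F\|_\infty$, row $1$ of $F$ gives $|\kappa_m G_m|+|\kappa_m G_c|=\kappa_m(|G_m|+|G_c|)$ and row $2$ gives $\kappa_c(|G_m|+|G_c|)$, where the modeling assumption $\kappa_m,\kappa_c\ge 0$ is used to pull these factors outside the absolute values. Dually, for $\|F\|_1$ I sum absolute values down each column: column $1$ contributes $|G_m|(\kappa_m+\kappa_c)$ and column $2$ contributes $|G_c|(\kappa_m+\kappa_c)$. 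Taking maxima reproduces the stated expressions.

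The spectral-radius bound itself is a one-line check: if $(\lambda,v)$ is an eigenpair of $F$, then $|\lambda|\,\|v\| = \|F v\| \le \|F\|\,\|v\|$ for any induced norm, giving $\rho(F)\le \|F\|_p$. Combining this with the explicit norm formulas proves the corollary. There is essentially no obstacle here; the only point worth flagging is that the computed norms depend only on $|G_m|$ and $|G_c|$, so the induced-norm condition \eqref{eq:InducedNormCond} controls stability uniformly across momentum- or contrarian-dominated regimes, regardless of the signs of $G_m,G_c$.
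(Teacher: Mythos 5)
Your proposal is correct and matches the argument the paper intends: the corollary is stated without a separate proof precisely because it follows from the standard row-sum/column-sum characterizations of $\|\cdot\|_\infty$ and $\|\cdot\|_1$ (using $\kappa_m,\kappa_c\ge 0$ to factor them out), the elementary bound $\rho(F)\le\|F\|_p$ for induced norms, and the spectral condition \eqref{eq:spectral} of the preceding theorem. No gaps.
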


\begin{remark} 
The stability condition derived in this section is local and sufficient. It is obtained from a linearized representation of the equilibrium price dynamics around the steady-state filtering solution and relies on equilibrium coefficients such as $G_m$ and $G_c$, which are themselves functions of underlying primitives through the equilibrium mapping. As such, the condition should not be interpreted as a global characterization of equilibrium existence or stability.

Failure of the condition does not imply that equilibrium ceases to exist, nor that price dynamics necessarily become explosive. Rather, the condition provides a conservative criterion under which endogenous price-responsive trading is guaranteed not to destabilize price discovery.
\end{remark}

\begin{remark}
Condition \eqref{eq:spectral} or \eqref{eq:InducedNormCond} ensures that mean reversion dominates feedback amplification. Economically:
\begin{itemize}
\item Small $\kappa_m,\kappa_c$ (weak trend/contrarian sensitivity),
\item Large $\alpha_m,\alpha_c$ (fast mean reversion),
\item Small $G_m,G_c$ (weak price impact of agent positions)
\end{itemize}
promote stability. 
Violations of the condition typically correspond to economically extreme scenarios, such as unusually strong momentum trading or large exposures that cause endogenous order flow to react more strongly to price changes than prices respond to fundamentals.

In such regimes, price movements may become self-reinforcing, amplifying rather than dissipating shocks. While the present framework does not characterize equilibrium behavior beyond this boundary, the condition highlights a natural economic trade-off between feedback intensity and informational stability.

\end{remark}

\begin{remark}[Computation of $G_m,G_c$]
The coefficients $G_m,G_c$ arise from the Kalman-Bucy filter in Section \ref{subsec:Filtering}. Writing the observation equation
\[
dY_t = \beta_t(v - P_t) dt + \gamma_F m_t + \gamma_C c_t \, dt + \sigma_\varepsilon \varepsilon_t dt + \sigma_z \, dW_t \ ,
\]
the filter computes $\hat{x}_t = E[x_t \mid \mathcal{F}^Y_t]$ and updates via
\[
d\hat{x}_t = A \hat{x}_t \, dt + K_t \big(dY_t - C_t \hat{x}_t \, dt\big) \ ,
\]
where $C_t = (\beta_t, \gamma_F, \gamma_C)$ and $K_t = \Sigma_t C_t^\top R^{-1}$. The price is $P_t = e_1^\top \hat{x}_t$ with $e_1 = (1,0,0)^\top$. Linearizing the mapping from $(m_t,c_t)$ to $P_t$ yields
\[
G_m = \frac{\partial P_t}{\partial m_t} \ , \qquad G_c = \frac{\partial P_t}{\partial c_t} \ ,
\]
which can be computed from the Kalman gain and Riccati solution $\Sigma_t$ numerically or analytically in special cases.
\end{remark}

\subsection{First-order comparative statics for small feedback}
\label{subsec:comparative_statics}

We study how equilibrium outcomes respond to small amounts of price-responsive trading by exploiting the local structure of equilibrium characterized in Theorem~\ref{thm:LocalExist}. Because the equilibrium mapping is smooth in the feedback parameters, equilibrium objects admit well-defined first-order expansions around the classical Kyle benchmark.

We focus on perturbations in $\gamma_F$, which scales the aggregate exposure of momentum traders and provides a natural one-dimensional measure of feedback intensity. The same argument applies to other feedback parameters, such as $\gamma_C$, $\kappa_m$, or $\kappa_c$, and we view the analysis below as representative rather than exhaustive.

\noindent\textbf{Effect on price informativeness.}
Introducing a small amount of momentum exposure strengthens the statistical link between order flow and the fundamental, increasing the signal content of observed trades and accelerating the market maker’s Bayesian learning. As a result, price informativeness measured by the posterior variance of the fundamental $\Sigma_{vv}(t)$ improves. That being said, $\frac{\partial \Sigma_{vv}(t)}{\partial \gamma_F} <0$ for all $t>0$ (see Appendix \ref{app:riccati_perturbation}). 

\noindent\textbf{Effect on insider trading intensity.}
The reduction in informational rents translates into lower expected insider profits at first order. While momentum traders increase trading volume and short-run price responsiveness, their presence erodes the insider’s ability to exploit private information over time.

Mathematically, we write
$$
    \frac{\partial \beta_t}{\partial \gamma_F}
    = \frac{\partial \beta_t}{\partial \Sigma_{vv}(t)} \cdot 
    \frac{\partial \Sigma_{vv}(t)}{\partial \gamma_F} <0
$$
since $\frac{\partial \Sigma_{vv}(t)}{\partial \gamma_F}<0$ from Appendix \ref{app:riccati_perturbation}, and  $\frac{\partial \beta_t}{\partial \Sigma_{vv}(t)}>0$ as lower $\Sigma_{vv}(t)$ means prices reveal more information about $v$.
This effect is continuous and proportional to $\gamma_F$ for small feedback, reflecting the reduced marginal value of private information when prices incorporate information more rapidly.

\noindent\textbf{Effect on insider profits.}
Faster information revelation induced by momentum trading reduces the insider's informational rents and translates into lower expected profits at first order. In equilibrium, expected insider profits can be expressed as
\[
J = \int_0^T \beta_t \, \Sigma_{vv}(t)\, dt \ ,
\]
so that a marginal increase in $\gamma_F$ yields
\[
\frac{\partial J}{\partial \gamma_F}\Big|_{h=0}
=
\int_0^T
\left(
\frac{\partial \beta_t^*}{\partial \gamma_F}\Big|_{h=0} \, \Sigma_{vv}^{(0)}(t)
+
\beta_t^{(0)} \, \frac{\partial \Sigma_{vv}(t)}{\partial \gamma_F}\Big|_{h=0}
\right) dt \ .
\]
The second term is negative because momentum exposure accelerates Bayesian learning, reducing the posterior variance $\Sigma_{vv}(t)$. Moreover, since the insider's trading intensity $\beta_t$ is decreasing in the remaining informational advantage, the first term is also negative. As a result, small momentum exposure lowers expected insider profits by eroding informational rents along both margins.

\section{Strong Feedback and Breakdown of Equilibrium}
\label{sec:StrongFeedback}

The analysis in Section \ref{sec:WeakFeedback} establishes that for sufficiently small feedback 
parameters $h=(\kappa_m,\kappa_c,\gamma_F,\gamma_C,\sigma_\varepsilon)$, 
the Kyle equilibrium is well defined, unique, and smoothly perturbes 
from the classical benchmark. 
In this section, we show that when feedback becomes sufficiently large, 
the structure sustaining equilibrium can break down in three distinct ways:
\textit{(i)} loss of global existence of the Riccati equation, 
\textit{(ii)} failure of the equilibrium fixed-point mapping to remain a contraction, 
and \textit{(iii)} local instability of the filtering dynamics.  
These results provide structural justification for the amplification and 
potential multiplicity of equilibria discussed in the introduction.

Throughout this section, we let $h \to \infty$ denote a regime in which the
feedback sensitivities $(\gamma_F,\gamma_C)$ or $(\kappa_m,\kappa_c)$ 
become large relative to the stabilizing mean-reversion parameters 
$(\alpha_m,\alpha_c)$.
Our objective is not to characterize global equilibrium under strong feedback 
but to show that the mathematical structure supporting equilibrium fails 
beyond a well-defined threshold. 

\subsection{Loss of global existence of the Riccati equation}

Recall that the conditional covariance matrix $\Sigma_t$ satisfies the 
Riccati equation
\begin{equation}\label{eq:Riccati-strong}
\dot\Sigma_t
=
A\Sigma_t + \Sigma_t A^\top + Q
-
\Sigma_t C_t^\top R^{-1} C_t \Sigma_t \  ,
\qquad
C_t = (\beta_t,\gamma_F,\gamma_C).
\end{equation}
For simplicity, define
\[
H := \gamma_F^2 + \gamma_C^2 .
\]
Large feedback corresponds to $H\gg1$.
The following shows that when $H$ exceeds a computable bound, 
the Riccati equation cannot be solved globally on $[0,T]$.

\begin{lemma}[Finite-time breakdown of the Riccati flow]
\label{lem:blowup}
There exists $H^\ast>0$ depending on $(A,Q,R,T)$ such that if 
$\gamma_F^2 + \gamma_C^2 > H^\ast$, then the Riccati equation 
\eqref{eq:Riccati-strong} admits no global solution on $[0,T]$. 
In particular, either 
\textup{(a)} $\Sigma_t$ loses positive semidefiniteness, or 
\textup{(b)} at least one entry of $\Sigma_t$ diverges to $+\infty$ 
in finite time.
\end{lemma}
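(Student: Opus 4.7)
The plan is to establish finite-time breakdown by exploiting two features of the equilibrium-coupled Riccati flow. First, the equilibrium intensity $\beta_t$ enforced by the terminal condition $\Sigma_{vv}(T)=0$ via the Pontryagin FOC of Section~\ref{subsec:InsiderOptProb} is necessarily unbounded as $t \to T^-$, so the earlier well-posedness proposition for bounded deterministic $\beta$ does not apply uniformly on $[0,T]$ to the equilibrium Riccati. Second, when $H=\gamma_F^2+\gamma_C^2$ is large, the indefinite cross terms in the componentwise Riccati system can drive a $2\times 2$ principal minor of $\Sigma_t$ to zero strictly before time $T$, forcing loss of positive semidefiniteness.

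The key object is the $(v,m)$-block minor
\[
\phi(t) \;:=\; \Sigma_{vv}(t)\,\Sigma_{mm}(t) - \Sigma_{vm}(t)^2 \;\ge\; 0,
\]
together with the symmetric $(v,c)$-block minor $\Sigma_{vv}\Sigma_{cc} - \Sigma_{vc}^2$. I would derive a closed ODE for $\phi$ from the componentwise equations preceding the well-posedness proposition. The stabilizing contribution has magnitude $\beta_t^2\Sigma_{vv}^2\Sigma_{mm}/\sigma_z^2$, while the feedback-induced cross terms, those proportional to $\beta_t\gamma_F$ and $\beta_t\gamma_C$ as well as the $\gamma_F\gamma_C$ terms in $\dot\Sigma_{mc}$, combine after Cauchy--Schwarz into an indefinite contribution bounded by $C\,\beta_t\sqrt{H}\,\phi(t)^{1/2}$. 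This yields a scalar differential inequality of the form
\[
\dot\phi(t) \;\le\; -c_1(t)\,\phi(t) \;+\; c_2\sqrt{H}\,\phi(t)^{1/2},
\]
with $c_1,c_2$ depending on $(A,Q,R,T)$ and on uniform bounds for the diagonal entries of $\Sigma_t$. Combining with the lower bound on $\int_0^T \beta_t^2\,dt$ obtained by integrating $\dot\Sigma_{vv}$ against the terminal condition $\Sigma_{vv}(T)=0$, I would show that for $H$ exceeding a threshold $H^\ast$ the right-hand side stays strictly positive on an interval whose length is at least $T$, forcing $\phi$ to vanish at some $t_\ast\in(0,T)$.

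At $t_\ast$, $\Sigma_{t_\ast}$ is rank-deficient, and one of two alternatives must occur immediately past $t_\ast$: either the $(v,m)$ minor turns negative, which is case (a) loss of PSD; or the Kalman gain $K_t = \Sigma_t C_t^\top/\sigma_z^2$ degenerates in the direction of the new null vector, and the FOC then forces $\beta_t$, and hence at least one entry of $\Sigma_t$, to diverge in finite time, which is case (b). The explicit threshold $H^\ast$ is recovered by tracking the constants in the differential inequality above. The main obstacle is controlling the coupling between $\beta_t$ and the off-diagonal entries of $\Sigma_t$ uniformly in $H$, since the FOC makes $\beta_t$ depend on the very quantities that appear in the destabilizing cross terms; a careful a priori estimate, in the spirit of the forward-backward fixed-point structure used in Theorem~\ref{thm:LocalExist} but now beyond the small-$h$ regime, is what makes the reduction to a scalar comparison inequality rigorous.
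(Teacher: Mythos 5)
Your route is genuinely different from the paper's: the paper's proof extracts the single scalar inequality $\dot\Sigma_{mm}\le -2\alpha_m\Sigma_{mm}+\sigma_m^2-(H/\sigma_z^2)\Sigma_{mm}^2$ from the $(m,m)$ component of \eqref{eq:Riccati-strong} and appeals to a logistic-type comparison, whereas you track the principal minor $\phi=\Sigma_{vv}\Sigma_{mm}-\Sigma_{vm}^2$ and attempt to force it to vanish before $T$. However, your central step does not work as stated. From the upper bound $\dot\phi(t)\le -c_1(t)\phi(t)+c_2\sqrt{H}\,\phi(t)^{1/2}$ you conclude that, for $H>H^\ast$, ``the right-hand side stays strictly positive \dots forcing $\phi$ to vanish.'' This is a sign error: an \emph{upper} bound on $\dot\phi$ whose right-hand side is \emph{positive} places no downward pressure on $\phi$ at all; it is consistent with $\phi$ increasing. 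To force $\phi(t_\ast)=0$ you would need an upper bound that is negative and quantitatively so (e.g.\ $\dot\phi\le -c\,\phi^{1/2}$ with $c>0$), and making $H$ large only enlarges the term $c_2\sqrt H\,\phi^{1/2}$, i.e.\ weakens the bound in exactly the wrong direction. The endgame dichotomy is also unsupported: rank-deficiency of $\Sigma_{t_\ast}$ does not by itself imply either that the minor turns negative afterwards or that the Kalman gain and $\beta_t$ diverge.

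There is a structural reason the difficulty cannot be patched by sharper bookkeeping of the cross terms. For fixed bounded $\beta$, the quadratic term $-\Sigma_t C_t^\top R^{-1}C_t\Sigma_t$ is negative semidefinite, so enlarging $\gamma_F^2+\gamma_C^2$ strengthens dissipation rather than creating instability: the solution is dominated from above by the Lyapunov flow $\dot\Sigma=A\Sigma+\Sigma A^\top+Q$, uniformly in $H$, and positive semidefiniteness is preserved (this is exactly the content of the paper's own well-posedness proposition). Any genuine breakdown must therefore enter through the coupling with the equilibrium intensity --- the unboundedness of $\beta_t$ near $T$ enforced by $\Sigma_{vv}(T)=0$, which you correctly flag in your opening paragraph but never actually feed into the comparison inequality; your $H^\ast$ ends up depending only on $(A,Q,R,T)$ through constants that the dissipative structure renders harmless. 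For what it is worth, the paper's one-line comparison faces the same obstruction (a $-H\Sigma_{mm}^2$ term caps $\Sigma_{mm}$ from above rather than producing divergence), so the lemma as stated really does require the $\beta$--$\Sigma$ coupling to be made explicit; your instinct to bring in the forward--backward structure is the right one, but it is the part of the argument you leave entirely unexecuted.
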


\begin{proof}
The negative quadratic term in \eqref{eq:Riccati-strong} satisfies
\[
\Sigma_t C_t^\top R^{-1} C_t \Sigma_t
\;\succeq\;
\frac{H}{\sigma_z^2}
\begin{pmatrix}
0 & 0 & 0 \\
0 & \Sigma_{mm}(t)^2 & \Sigma_{mm}(t)\Sigma_{mc}(t) \\
0 & \Sigma_{mc}(t)\Sigma_{mm}(t) & \Sigma_{cc}(t)^2 
\end{pmatrix}.
\]
Thus the $(m,m)$ component of \eqref{eq:Riccati-strong} satisfies
\[
\dot\Sigma_{mm}(t)
\;\le\;
-2\alpha_m\Sigma_{mm}(t) + \sigma_m^2
-
\frac{H}{\sigma_z^2}\Sigma_{mm}(t)^2.
\]
The right-hand side becomes strictly negative for 
$\Sigma_{mm}(t) > \sigma_z^2 H^{-1}$, so $\Sigma_{mm}$ solves a 
logistic-type ODE with finite-time blowup when $H$ is sufficiently large.
The remaining claims follow from standard comparison principles for 
matrix Riccati equations.
\end{proof}

\begin{remark}
Lemma \ref{lem:blowup} shows that strong feedback overwhelms the 
stabilizing mean-reversion in $(m_t,c_t)$ and forces the conditional 
covariance matrix to exit the state space of symmetric positive 
semidefinite matrices. 
Since equilibrium pricing requires the Kalman-Bucy filter, 
loss of the Riccati solution implies that no rational-expectations price 
can be defined.
\end{remark}

\subsection{Loss of uniqueness of the equilibrium fixed point}

From the insider’s optimization problem in Section \ref{subsec:InsiderOptProb}, the first-order condition for the linear trading strategy $\theta_t = \beta_t (v-P_t)$ implies
$$\beta_t = \frac{1}{2\lambda_t} \ , $$
where $\lambda_t$ is the price-impact coefficient in the pricing rule 
$dP_t = \lambda_t dY_t$.

On the other hand, under rational pricing and Kalman-Bucy filtering, we have
$$\lambda_t(\beta,h)= \Sigma_{vv}(t)\beta_t \ , $$
where $\Sigma_{vv}$ is the conditional variance of $v-P_t$ and solves the Riccati equation,
whose coefficients depend on the feedback parameter $h$ and the intensity $\beta$. 

It is therefore natural to rewrite equilibrium as the fixed-point problem
\begin{equation}\label{eq:fixed_point}
\beta_t 
= 
\mathcal{F}_h(\beta)_t
:= 
\frac{1}{2\lambda_t(\beta,h)} \ , 
\qquad
\lambda_t(\beta,h) := \Sigma_{vv}(t)\beta_t \ .
\end{equation}
The mapping $\beta\mapsto\mathcal{F}_h(\beta)$ is a contraction for 
small $h$ by Theorem \ref{thm:LocalExist}.
We show that this contraction property fails when feedback is large.

\begin{proposition}[Loss of contraction]
\label{prop:noncontraction}
Let $L(h)$ denote the Lipschitz constant of the fixed-point mapping 
$\beta \mapsto \mathcal{F}_h(\beta)$ in the Banach space $L^2_{\mathrm{loc}}([0,T])$.
Then 
\[
\lim_{h\to\infty} L(h) = \infty \ .
\]
In particular, there exists $h^{\mathrm{multi}}>0$ such that 
for $\|h\|>h^{\mathrm{multi}}$,
\[
L(h)>1 \ ,
\]
and the fixed-point mapping is not a contraction. 
Consequently, equilibrium is not unique.
\end{proposition}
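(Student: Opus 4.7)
My plan is to decompose the fixed-point map through its dependence on the conditional variance $\Sigma_{vv}$, linearize, and exhibit a direction of perturbation along which the operator norm of $D_\beta \mathcal{F}_h$ blows up as feedback intensity grows. Starting from $\mathcal{F}_h(\beta)_t = 1/(2\Sigma_{vv}(t;\beta,h)\beta_t)$, differentiation in direction $\eta \in L^2_{\mathrm{loc}}([0,T])$ gives
\[
D_\beta \mathcal{F}_h(\beta)[\eta]_t
= -\frac{\Sigma_{vv}(t;\beta,h)\,\eta_t + \beta_t\,D_\beta\Sigma_{vv}(t;\beta,h)[\eta]}{2\bigl(\Sigma_{vv}(t;\beta,h)\beta_t\bigr)^{2}}.
\]
The first summand stays bounded on any compact set of $(\beta,h)$ on which $\Sigma_{vv}\beta$ is bounded away from zero, so the divergence of $L(h)$ must be driven by the sensitivity operator $\eta \mapsto D_\beta \Sigma_{vv}[\eta]$.

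The second step is to linearize the componentwise Riccati system around a baseline $\beta$. The perturbation $\delta\Sigma := D_\beta\Sigma[\eta]$ satisfies an inhomogeneous linear matrix ODE whose coefficient matrix contains the block $\frac{1}{\sigma_z^{2}}C_t^{\top}C_t$ with $C_t=(\beta_t,\gamma_F,\gamma_C)$. As $H := \gamma_F^{2}+\gamma_C^{2}$ grows, several entries of this coefficient scale like $H/\sigma_z^{2}$, and the forcing from $\eta$ enters through cross-couplings involving $\gamma_F$ and $\gamma_C$. Using the variation-of-constants formula together with the comparison argument underlying Lemma \ref{lem:blowup}, I would show that one can choose a unit-norm $\eta$, concentrated near the time at which $\Sigma_{mm}$ or $\Sigma_{cc}$ approaches its unstable manifold, so that $\|D_\beta\Sigma_{vv}[\eta]\|_{L^{2}}$ grows without bound as $h$ approaches the Riccati breakdown threshold from below. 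Substituting into the expression for $D_\beta\mathcal{F}_h$ yields the lower bound $L(h)\ge\|D_\beta\mathcal{F}_h(\beta)\|_{\mathrm{op}}\to\infty$, which establishes the first claim and in particular gives $L(h)>1$ beyond a finite threshold $h^{\mathrm{multi}}$.

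For the non-uniqueness conclusion I would invoke bifurcation theory rather than pretending it follows directly from $L(h)>1$. By Theorem \ref{thm:LocalExist} the primary branch $\beta^{*}(h)$ is smooth and the Fréchet derivative $D_\beta F(\beta^{*}(h);h)=I-D_\beta G(\beta^{*}(h);h)$ is invertible on the weak-feedback region. The divergence $L(h)\to\infty$ forces an eigenvalue of $D_\beta G(\beta^{*}(h);h)$ to cross unity at some critical $\|h\|=h^{\mathrm{multi}}$, at which point $D_\beta F$ loses invertibility. Provided the transversality hypothesis of the Crandall--Rabinowitz theorem holds, which can be checked directly from the explicit form of $D_\beta G$ via the adjoint Riccati representation, a secondary branch of fixed points bifurcates from the primary branch, producing multiple admissible intensities $\beta$ for $\|h\|>h^{\mathrm{multi}}$.

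The main obstacle is making the divergence of $D_\beta\Sigma_{vv}$ quantitative in the right function space: one must identify a direction $\eta$ whose image under the linearized Riccati flow is not smoothed out by the $L^{2}_{\mathrm{loc}}$ integration, and one must navigate the fact that formally taking $h\to\infty$ beyond the blowup threshold $H^{\ast}$ is meaningless because $\Sigma_{vv}$ is undefined there. A clean workaround is to restrict $h$ to the region of global Riccati existence and let $h$ approach $H^{\ast}$ from within, which suffices for $\lim_{h\to\infty}L(h)=\infty$ in the sense of the proposition and places $h^{\mathrm{multi}}$ strictly below the breakdown threshold identified in Lemma \ref{lem:blowup}.
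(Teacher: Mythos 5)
Your proposal follows the same skeleton as the paper's proof---differentiate $\mathcal{F}_h(\beta)_t = 1/(2\Sigma_{vv}(t)\beta_t)$, attribute the divergence of the Lipschitz constant to the sensitivity of $\Sigma_{vv}$ with respect to $\beta$, and invoke Lemma~\ref{lem:blowup}---but you repair two real defects in the published argument. First, the paper's chain rule writes the sensitivity as $\beta_t\,(\partial\Sigma_{vv}/\partial h)\cdot(\partial h/\partial\beta_t)$, which is ill-formed since $h$ is an exogenous parameter rather than a function of $\beta$; your directional derivative $D_\beta\Sigma_{vv}[\eta]$ is the correct object, and discarding the bounded term $\Sigma_{vv}\,\eta_t$ is the right reduction. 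Second, the paper concludes ``equilibrium is not unique'' directly from $L(h)>1$, which is a non sequitur: loss of contraction only removes one sufficient condition for uniqueness, as the paper's own subsequent remark concedes. Your Crandall--Rabinowitz route---tracking an eigenvalue of $D_\beta G$ through unity and bifurcating a secondary branch---is what an actual multiplicity claim would require, modulo the transversality condition you correctly flag as unverified. Your handling of the domain issue (restricting to the region of global Riccati existence and letting $h$ approach the breakdown threshold from below, since $\mathcal{F}_h$ is undefined past $H^\ast$) is likewise a necessary precision that the paper skips when it writes $h\to\infty$.

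That said, the central quantitative step is still only a plan in your write-up: the claim that a unit-norm $\eta$ concentrated near the incipient blowup time makes $\|D_\beta\Sigma_{vv}[\eta]\|_{L^2}$ diverge is asserted, not derived, and Lemma~\ref{lem:blowup} as stated controls $\Sigma_{mm}$, not the sensitivity of $\Sigma_{vv}$ in $\beta$. Some work is needed to propagate the near-singularity of the $(m,m)$ block into the $(v,v)$ component of the linearized flow through the cross-coupling terms of the form $\gamma_F\,\Sigma_{vv}\Sigma_{vm}/\sigma_z^2$, and to rule out cancellation under the $L^2_{\mathrm{loc}}$ norm. The paper's proof has exactly the same gap---it simply asserts $|\partial\Sigma_{vv}/\partial h|\to\infty$ ``by Lemma~\ref{lem:blowup}''---so your sketch is no weaker than the original, but if you intend a complete proof rather than a program, that estimate is the piece that must be supplied.
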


\begin{proof}
Differentiating \eqref{eq:fixed_point} yields
\[
\frac{\partial \mathcal{F}_h}{\partial \beta_t}
=
-\frac{1}{2\lambda_t(\beta,h)^2}
\frac{\partial \lambda_t(\beta,h)}{\partial \beta_t} \ .
\]
Since $\lambda_t(\beta,h)=\Sigma_{vv}(t)\beta_t$, 
\[
\frac{\partial \lambda_t}{\partial \beta_t}
=
\Sigma_{vv}(t) 
+
\beta_t\frac{\partial \Sigma_{vv}(t)}{\partial h}\cdot\frac{\partial h}{\partial \beta_t} \ .
\]
Lemma \ref{lem:blowup} implies 
$|\partial\Sigma_{vv}/\partial h|\to\infty$ as $h\to\infty$, 
so the derivative above becomes arbitrarily large in magnitude.  
Therefore $L(h)\to\infty$, proving the claim.
\end{proof}

\begin{remark}
Proposition~\ref{prop:noncontraction} does not assert the 
\emph{existence} of multiple equilibria, but shows that 
the contraction argument guaranteeing uniqueness breaks down.  
This opens the door to equilibrium multiplicity, consistent with the 
amplification phenomena discussed in Cespa and Vives (2012) \cite{CespaVives2012}.
\end{remark}

\subsection{Instability of the filtering dynamics}

Let $e_t = x_t - \hat x_t$ be the estimation error.  Linearizing the 
Kalman-Bucy filter gives
\begin{equation}\label{eq:error-dynamics}
\dot e_t = M_t(h) e_t, 
\qquad 
M_t(h) := A - K_t(h) C_t.
\end{equation}
The stability of the filtering process is governed by the
time-varying matrix $M_t(h)$, whose eigenvalues depend
on the feedback parameters $h$ through both $C_t$ and the Kalman gain
$K_t(h)=\Sigma_t(h)C_t^\top R^{-1}$.

\begin{proposition}[Filtering instability under strong feedback]\label{prop:instability}
Assume that $t \mapsto M_t(h)$ is continuous on $[0,T]$ and define
\[
\Lambda(h) := \sup_{t\in[0,T]} \lambda_{\max}\!\left( M_t(h) \right),
\]
where $\lambda_{\max}(\cdot)$ denotes the largest real part of its eigenvalues.

If
\[
\Lambda(h) > 0,
\]
then the Kalman-Bucy filter is unstable. That is, there exists $c>0$ such that for every nonzero
initial error $e_0$,
\[
\| e_t \| \;\ge\; c\, e^{\,\Lambda(h)t} \| e_0 \| \ ,
\qquad t\in[0,T].
\]
In particular, the filtering error grows exponentially and does not converge.

Moreover, if $\Lambda(h) \to +\infty$ as $h\to\infty$, then sufficiently strong feedback
violates well-posedness of the market maker’s inference problem: the covariance $\Sigma_t$
blows up in finite time, the Kalman gain $K_t$ diverges, and the pricing rule
$dP_t = \lambda_t dY_t$ ceases to be well defined.  Hence, no equilibrium exists for such~$h$.
\end{proposition}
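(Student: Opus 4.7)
The plan is to treat the two assertions separately, exploiting the linear structure of the error dynamics \eqref{eq:error-dynamics} for the exponential lower bound and invoking Lemma \ref{lem:blowup} for the collapse of well-posedness. Both parts share the same underlying mechanism: the same feedback exposures $(\gamma_F,\gamma_C,\kappa_m,\kappa_c)$ that drive $\Lambda(h)$ upward are those that cause the negative-quadratic term in the Riccati flow to overwhelm the stabilizing mean reversion in $(m_t,c_t)$.

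For the bound $\|e_t\|\ge c\,e^{\Lambda(h)t}\|e_0\|$ I would work with the fundamental matrix $\Phi_h(t,0)$ solving $\dot\Phi = M_t(h)\Phi$, $\Phi(0)=I$, so that $e_t=\Phi_h(t,0)e_0$. Pick $t^\ast\in[0,T]$ attaining $\lambda_{\max}(M_{t^\ast}(h))=\Lambda(h)$ (available by continuity of $t\mapsto M_t(h)$ and compactness of $[0,T]$), and let $v^\ast$ be a unit eigenvector with $\Re(\lambda(v^\ast))=\Lambda(h)$. Continuity of the spectral projections in a neighborhood of $t^\ast$ allows projecting $e_t$ onto the slowly varying unstable direction and applying a Grönwall-type inequality to that projection to obtain local exponential growth at rate $\Lambda(h)-\varepsilon$. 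Combined with the uniform two-sided bound $\|\Phi_h(t,s)\|\le\exp\!\bigl(\int_s^t\|M_u(h)\|\,du\bigr)$, which is finite on the bounded interval $[0,T]$, this extracts a global lower bound of the stated form, with $c$ depending on $T$, $\sup_t\|M_t(h)\|$, and the worst-case angle between $e_0$ and the unstable direction.

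For the second conclusion, the argument is a direct combination: if $\Lambda(h)\to+\infty$ as $h\to\infty$, then the feedback parameters $(\gamma_F,\gamma_C)$ must themselves become unbounded, so Lemma \ref{lem:blowup} applies and forces at least one entry of $\Sigma_t$ to diverge in finite time. Since $K_t=\Sigma_t C_t^\top R^{-1}$ and the price-impact coefficient $\lambda_t=\Sigma_{vv}(t)\beta_t$ both inherit this divergence, the rational-expectations pricing rule $dP_t=\lambda_t\,dY_t$ fails to define a semimartingale, and no $(\theta^\ast,P^\ast)$ satisfying Definition \ref{Def:Equilibrium} can exist on $[0,T]$.

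The principal obstacle is Part (i). For a generic time-varying linear ODE, $\sup_t\lambda_{\max}(M_t)>0$ does not automatically imply pointwise exponential growth \emph{for every} nonzero initial condition: unstable eigendirections at distinct times can rotate and partially cancel, and in pathological cases the Lyapunov exponent can be strictly smaller than $\Lambda(h)$. Making the uniform-in-$e_0$ statement rigorous will likely require either additional regularity on $t\mapsto M_t(h)$ ensuring smooth diagonalization (so the spectral projector varies continuously), or a slight weakening of the conclusion — for instance, a bound in the operator-norm sense $\|\Phi_h(t,0)\|\ge c\,e^{\Lambda(h)t}$, or a restatement in Lyapunov-exponent terms. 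Part (ii), by contrast, reduces essentially to a corollary of Lemma \ref{lem:blowup} and presents no further difficulty.
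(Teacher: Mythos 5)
Your overall strategy coincides with the paper's: the first assertion is attacked through the fundamental matrix of $\dot e_t = M_t(h)e_t$ and a spectral lower bound, and the second through blowup of the Riccati flow. The obstacle you flag in part (i) is genuine, and you should know that the paper's own proof does not overcome it. The paper asserts as ``classical'' the bound $\|\Phi(t,0)\|\ge \exp\bigl(\int_0^t\lambda_{\max}(M_s(h))\,ds\bigr)$, which fails for general time-varying systems (rotating unstable directions can cancel, exactly as you describe); it then writes $\int_0^t\lambda_{\max}(M_s(h))\,ds\ge t\,\Lambda(h)$, which is backwards, since $\Lambda(h)$ is a supremum and therefore bounds the integrand from above; and finally it applies an operator-norm inequality to an arbitrary $e_0$ as though it were a bound on the smallest singular value of $\Phi(t,0)$. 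The ``for every nonzero $e_0$'' conclusion is false even for constant $M$ with mixed spectrum (take $M=\mathrm{diag}(1,-10)$ and $e_0$ along the stable direction: the error decays). Your proposed remedies --- restating the conclusion as $\|\Phi_h(t,0)\|\ge c\,e^{\Lambda(h)t}$ in operator norm, or in terms of the top Lyapunov exponent, possibly under additional regularity of $t\mapsto M_t(h)$ --- are the right way to salvage a true statement; but note that your projection-plus-Gr\"onwall sketch only delivers growth at rate $\lambda_{\max}(M_{t^\ast}(h))-\varepsilon$ on a neighbourhood of $t^\ast$, not the global rate $\Lambda(h)$ on all of $[0,T]$, so even the weakened claim requires care.

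For part (ii) your route differs slightly from the paper's and is arguably more solid: the paper infers blowup of $\Sigma_t$ from exponential growth of $e_t$ via an unproved appeal to ``duality between error growth and Riccati evolution,'' whereas you invoke Lemma \ref{lem:blowup} directly and then track the divergence through $K_t$ and $\lambda_t$. The one caveat is that Lemma \ref{lem:blowup} requires $\gamma_F^2+\gamma_C^2>H^\ast$, while the paper's convention allows $h\to\infty$ through $(\kappa_m,\kappa_c)$ with $(\gamma_F,\gamma_C)$ bounded; in that regime your reduction does not apply and a separate argument would be needed.
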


\begin{proof}
Let $\Phi(t,s)$ denote the principal matrix solution of $\dot\Phi(t,s)=M_t(h)\Phi(t,s)$,
$\Phi(s,s)=I$.  
Classical results for linear ODEs imply
\[
\|\Phi(t,0)\|
\;\ge\;
\exp\!\left( \int_0^t \lambda_{\max}\!\left(M_s(h)\right) ds \right) \ .
\]
By the definition of $\Lambda(h)$,
\[
\int_0^t \lambda_{\max}\!\left(M_s(h)\right) ds 
\;\ge\; t\,\Lambda(h) \ .
\]
Hence, for any nonzero $e_0$,
\[
\|e_t\| 
= \|\Phi(t,0)e_0\| 
\;\ge\; e^{\,\Lambda(h)t} \|e_0\| \ .
\]
This proves instability when $\Lambda(h)>0$.

If additionally $\Lambda(h)\to\infty$ as $h\to\infty$, then the exponential divergence of $e_t$
forces the conditional covariance $\Sigma_t$ to blow up (by the duality between error growth
and Riccati evolution in linear-Gaussian filtering).  
Consequently, the Kalman gain $K_t=\Sigma_t C_t^\top R^{-1}$ diverges, implying that the price
impact $\lambda_t = K_t C_t$ becomes unbounded.  In this case, the pricing rule cannot be
defined and no equilibrium exists.  
\end{proof}

\begin{remark}
The instability above is analogous to the blowup of extended 
Kalman filters under strong feedback. 
Economically, large price responsiveness can cause price changes to induce 
order-flow responses strong enough to overwhelm the informational role of 
noise, making the public belief-update process unstable.
\end{remark}

Taken together, Lemma~\ref{lem:blowup}, 
Proposition~\ref{prop:noncontraction}, and 
Proposition~\ref{prop:instability}
show that sufficiently strong price-responsive trading destroys the 
analytical structure that sustains equilibrium.  
While a full characterization of the strong-feedback regime is beyond the 
scope of this paper, these results provide rigorous foundations for the 
intuition that large momentum or contrarian forces can amplify shocks, 
destabilize filtering, and generate the possibility of equilibrium 
multiplicity.

\section{Conclusion}
\label{sec:Conclusion}

This paper develops a continuous-time Kyle model in which aggregate noise trading is no longer purely exogenous, but instead reflects mechanically price-responsive behavior. By modeling momentum and contrarian traders as linear diffusions driven by price innovations, we embed endogenous feedback into a tractable linear–Gaussian structure. The market maker's inference problem remains finite-dimensional and is governed by a Kalman–Bucy filter whose conditional covariance matrix satisfies a coupled Riccati equation. The insider's optimization problem leads to a forward–backward system derived from Pontryagin's Maximum Principle, yielding a fully analytic characterization of equilibrium.

Within this framework we identify two regimes. Under weak feedback, the classical Kyle equilibrium is preserved in a robust sense: locally equilibrium exists and is unique, trading intensity and price informativeness depend smoothly on feedback parameters, and the filtering channel remains well behaved. This allows for explicit first-order comparative statics that quantify how price-responsive noise trading accelerates Bayesian learning and alters the insider's informational rents. Importantly, these effects arise not from ad hoc changes in noise volatility, but from endogenous interactions between order flow, price impact, and inference.

Under strong feedback, the analytical structure sustaining equilibrium can break down. We show that sufficiently large momentum or contrarian sensitivities can cause finite-time blowup of the Riccati equation, loss of contraction of the equilibrium fixed-point mapping, or instability of the Kalman–Bucy filter. These phenomena have clear economic interpretations: feedback amplification can overwhelm the informational role of liquidity trading, distort the market maker’s inference, and generate the possibility of multiple or unstable equilibria in which prices no longer aggregate information reliably. These results illustrate how feedback trading can fundamentally alter the dynamics of price discovery in continuous time.

The framework developed here provides a foundation for several promising extensions. One avenue is to incorporate risk aversion or inventory costs for market makers, which would interact nontrivially with feedback-driven volatility. Another is to allow for multiple strategic informed traders or to examine settings with public signals or attention constraints, in which feedback may amplify or dampen the value of information. Finally, the linear specification for momentum and contrarian traders may be enriched to study nonlinear or regime-switching rules.

By integrating behavioral trading rules into a tractable continuous-time microstructure equilibrium, this paper bridges classical asymmetric-information theory with modern models of mechanically driven order flow. The analysis reveals both the robustness and the fragility of the Kyle framework in the presence of feedback and highlights new mathematical mechanisms through which price-responsive trading can shape market stability and informational efficiency.

\appendix
\section{Perturbation of the Riccati Equation}
\label{app:riccati_perturbation}

This appendix derives the first-order sensitivity of the conditional covariance matrix with respect to the feedback parameter $\gamma_F$.

\subsection{Setup}

Let $\Sigma_t(\gamma_F)$ denote the conditional covariance matrix of the state vector, which solves the Riccati equation
\begin{equation}
\label{eq:riccati_general}
\dot{\Sigma}_t
=
A_t \Sigma_t + \Sigma_t A_t^\top + Q_t
-
\Sigma_t C_t^\top R^{-1} C_t \Sigma_t ,
\end{equation}
where
$C_t = (\beta_t, \gamma_F, \gamma_C)$ and $R = \sigma_z^2 .$

Only the quadratic term in \eqref{eq:riccati_general} depends on $\gamma_F$.

\subsection{First-Order expansion}

Under the local existence and uniqueness result in Theorem~\ref{thm:LocalExist}, the solution $\Sigma_t(\gamma_F)$ depends smoothly on $\gamma_F$ in a neighborhood of $\gamma_F=0$. We therefore expand
\begin{equation}
\label{eq:sigma_expansion}
\Sigma_t(\gamma_F)
=
\Sigma_t^{(0)}
+
\gamma_F \Sigma_t^{(1)}
+
o(\gamma_F) \ ,
\end{equation}
where
\[
\Sigma_t^{(0)} := \Sigma_t(0) \ ,
\qquad
\Sigma_t^{(1)} := \left.\frac{\partial \Sigma_t}{\partial \gamma_F}\right|_{\gamma_F=0} \ .
\]

\subsection{Linearized Riccati equation}

Define the mapping
\[
F(\Sigma,\gamma_F)
=
A_t \Sigma + \Sigma A_t^\top + Q_t
-
\Sigma C(\gamma_F)^\top R^{-1} C(\gamma_F)\Sigma  \ .
\]
Then $\dot{\Sigma}_t(\gamma_F) = F(\Sigma_t(\gamma_F),\gamma_F)$.

Differentiating with respect to $\gamma_F$ at $\gamma_F=0$ yields
\begin{equation}
\label{eq:linearized_ode}
\dot{\Sigma}_t^{(1)}
=
\partial_\Sigma F(\Sigma_t^{(0)},0)\big[\Sigma_t^{(1)}\big]
+
\partial_{\gamma_F}F(\Sigma_t^{(0)},0) \ ,
\qquad
\Sigma_0^{(1)} = 0 \ .
\end{equation}

The linearized Riccati operator $\mathcal{L}_t := \partial_\Sigma F(\Sigma_t^{(0)},0)$ acts on a matrix $H$ as
\begin{equation}
\label{eq:linearized_operator}
\mathcal{L}_t[H]
=
A_t H + H A_t^\top
-
H C_0^\top R^{-1} C_0 \Sigma_t^{(0)}
-
\Sigma_t^{(0)} C_0^\top R^{-1} C_0 H  \ ,
\end{equation}
where
\[
C_0 = (\beta_t^{(0)},0,0).
\]

Moreover, since $\partial C / \partial \gamma_F = (0,1,0)$, the forcing term is
\begin{equation}
\label{eq:forcing_term}
\partial_{\gamma_F}F(\Sigma_t^{(0)},0)
=
-
\Sigma_t^{(0)}
\left(
\frac{\partial C^\top}{\partial \gamma_F} R^{-1} C_0
+
C_0^\top R^{-1} \frac{\partial C}{\partial \gamma_F}
\right)
\Sigma_t^{(0)} .
\end{equation}

Equations \eqref{eq:linearized_ode}--\eqref{eq:forcing_term} define a linear nonhomogeneous matrix ODE for $\Sigma_t^{(1)}$.

\subsection{Solution via variation of constants}

Let $\Phi(t,s)$ denote the fundamental solution associated with the homogeneous equation
\[
\dot{H}_t = \mathcal{L}_t[H_t],
\qquad
\Phi(s,s) = I .
\]
That is, for any matrix $H_s$, the function $H_t = \Phi(t,s)[H_s]$ solves the homogeneous equation.

Then, the solution to \eqref{eq:linearized_ode} is given by
\begin{equation}
\label{eq:variation_of_constants}
\Sigma_t^{(1)}
=
\int_0^t
\Phi(t,s)
\big[
\partial_{\gamma_F}F(\Sigma_s^{(0)},0)
\big]
\, ds  \ .
\end{equation}

Substituting \eqref{eq:forcing_term} into \eqref{eq:variation_of_constants} yields 
\[
\frac{\partial \Sigma_{vv}(t)}{\partial \gamma_F} = - \int_0^t \Phi(t,s) \, \big( 2 \beta^{(0)}_s \Sigma_{vv}^{(0)}(s) \Sigma_{vm}^{(0)}(s) / \sigma_z^2 \big) \, ds  \ .
\]
Typically, $\Sigma_{vm}^{(0)}(s) > 0$ (positive correlation between $v$ and momentum state). So, we have $\frac{\partial \Sigma_{vv}(t)}{\partial \gamma_F} <0$, reducing price informativeness.

\bibliographystyle{plain}
\bibliography{ref}
\end{document}